\newtheorem{theorem}{Theorem}[section]
\newtheorem{lemma}[theorem]{Lemma}
\newtheorem{prop}[theorem]{Proposition}
\newtheorem{definition}[theorem]{Definition}
\newtheorem{remark}[theorem]{Remark}
\newcommand{\Keywords}[1]{\par\noindent
{\small{\bf Keywords\/}: #1}}
\title{\sc optimal execution strategy in the presence of permanent price impact and fixed transaction cost}
\author{Mauricio Junca\thanks{Department of Industrial Engineering and Operations Research, University of California, Berkeley CA 94720-1777. Email address: {\tt mjunca@berkeley.edu}}}
\begin{document}

\maketitle

\begin{abstract}
We study a single risky financial asset model subject to price impact and transaction cost over an infinite horizon. An investor needs to execute a long position in the asset affecting the price of the asset and possibly incurring in fixed transaction cost. The objective is to maximize the discounted revenue obtained by this transaction. This problem is formulated first as an impulse control problem and we characterize the value function using the viscosity solutions framework. We also analyze the case where there is no transaction cost and how this formulation relates with a singular control problem. A viscosity solution characterization is provided in this case as well. We also establish a connection between both formulations with zero fixed transaction cost. Numerical examples with different types of price impact conclude the discussion.
\newline
\Keywords{Price impact, impulse control, singular control, dynamic programming, viscosity solutions}
\end{abstract}

\pagestyle{myheadings}
\markboth{M. JUNCA}{OPTIMAL EXECUTION WITH PRICE IMPACT}

\section{Introduction}

An important problem for stock traders is to unwind large block orders of shares. Market microstructure literature has shown (e.g. \citep{Chan,Holthausen}), both theoretically and empirically, that large trades move the price of risky securities either for informational or liquidity reasons. Several papers addressed this issue and formulated a hedging and arbitrage pricing theory for large investors under competitive markets. For example, in \citep{Cvitanic} a forward-backward SDE is defined, with the price process being the forward component and the wealth process of the investor's portfolio being the backward component. In both cases, the drift and volatility coefficients depend upon the price of the stocks, the wealth of the portfolio and the portfolio itself. \citep{Frey} describes the discounted stock price using a reaction function that depends on the position of the large trader. In \citep{Bank,CetinLiq} the authors, independently, described the price impact by assuming a given family of continuous semi-martingales indexed by the number of shares held (\citep{Bank}) and by the number of shares traded (\citep{CetinLiq}).

The optimal execution problem has been studied in \citep{Bertsimas, Almgren} in a discrete-time framework. In both cases the dynamics of the price processes are arithmetic random walks affected by the trading strategy. In \citep{Bertsimas} the impact is proportional to the amount of shares traded. In \citep{Almgren} the change in the price is twofold, a temporary impact caused by temporary imbalances in supply/demand dynamics and a permanent impact in the equilibrium or unperturbed price process due to the trading itself. Also, this work takes into account the variance of the strategy with a mean-variance optimization procedure. Later on, nonlinear price impact functions were introduced in \citep{Almgren2}. These ideas were adopted by more recent works under a continuous time framework. \citep{Schied3} propose the problem within a regular control setting. The authors consider expected-utility maximization for CARA utility functions, that is, for exponential utility functions. The dynamics of the price and the market impact function are fairly general. \citep{Schied2} is the only reference that considers an infinite horizon model based on the original model in \citep{Almgren}. Finally, \citep{He} consider only permanent price impact but they allow continuous and discrete trading (singular control setting) with a geometric Brownian motion as price process.

On the other hand, it is also well established that transaction costs in asset markets are an important factor in determining the trading behavior of market participants. Typically, two types of transaction costs are considered in the context of optimal consumption and portfolio optimization: proportional transaction costs (\citep{Davis,oksulem}) using singular type controls and fixed transaction costs (\citep{Korn,oksulem}) using impulse type controls. The market impact effect can be significantly reduced by splitting the order into smaller orders but this will increase the transaction cost effect. Thus, the question is to find optimal times and allocations for each individual placement such that the expected revenue after trading is maximized. \citep{Ly} include both permanent market price impact and fixed transaction cost and assume that the unperturbed price process is a geometric Brownian motion process. This reference only accepts discrete trading (impulse control setting) and uses the theory of (discontinuous) viscosity solutions to characterize the value function. Finally, \citep{ajay} propose a slightly different model which does not include any fixed transaction cost but includes an execution lag associated with size of the discrete trades. It is important to remark that all papers referenced above assume a terminal date at which the investor must liquidate her position.

In this paper we study an infinite horizon price impact model that includes fixed transaction cost under the setting of impulse control, similar to \citep{Ly}. We describe a general underlying price process and a general permanent market impact. With help of some classic results for optimal stopping problems and the discontinuous viscosity solutions theory for nonlinear partial differential equations, developed in references such as \citep{crisli,ishii1,ishii2,Fleming}, we obtain a fully characterization of the value function when the price process satisfies some technical condition. Most of the processes used in financial studies satisfy this condition. This characterization is not complete when the fixed transaction cost is zero. By analyzing the HJB equation obtained in the previous case, we formulate a singular control model to include this case. For this new formulation we are able to complete the characterization. We are able to show that both formulations agree in the value function even though the formulations are completely different, when we choose the appropriate market impact functions. Finally, we describe the value function and the optimal strategy explicitly for an important special case.

The structure of the paper is as follows: The general impulse control model, growth condition and boundary properties of the value function which are useful for the characterization of the function are exposed in Section 2.  This section characterizes the value function of the problem as a viscosity solution of the Hamilton-Jacobi-Bellman equation and shows uniqueness when the fixed transaction cost is strictly positive and the price process satisfies certain conditions. Section 3 proposes a singular control model to tackle the case when the transaction cost is zero. Here a viscosity solution characterization and uniqueness result are proved as well under the same conditions. We present a special case where the value function of the impulse control model coincides with the value function of the singular control model and obtain the value function explicitly for this case. Section 4 shows that, in fact, these two formulation produce the same value function even though they consider different types of control. Section 5 presents numerical results for different underlying price processes with both formulations. Finally, we state some conclusions and future work.

\section{Impulse control model}\label{model}

Let $(\Omega,\mathcal{F},(\mathcal{F}_t)_{0\leq t\leq \infty},\mathbb{P})$ be a probability space which satisfies the usual conditions and $B_t$ be a one-dimensional Brownian motion adapted to the filtration. We consider a continuous time process adapted to the filtration denoting the price of a risky asset $P_t$. The unperturbed price dynamics, when the investor makes no action, are given by:
\begin{align}
dP_s&=\mu(P_s)ds+\sigma(P_s)dB_s,
\end{align}
where $\mu$ and $\sigma$ satisfy regular conditions such that there is a unique strong solution of this SDE (i.e. Lipschitz continuity). We are mainly interested in price processes that are always non-negative, thus we assume that $P$ is absorbed as soon as it reaches 0 and that the initial price $p$ is non-negative.

Now, the number of shares in the asset held by the investor at time $t$ is denoted by $X_t$ and it is up to the investor to decide how to unwind the shares. Different models and formulations will define the admissible strategies for the investor. At the beginning the investor has $x\geq0$ number of shares and we only allow strategies such that $X_t\geq0$ for all $t\geq0$. Since the investor's interest is to execute the position, we don't allow to buy shares, that is $X_t$ is a non-increasing process.  Hence, we can see that $\mathbb{R}_+\times\mathbb{R}_{+}=\bar{\mathcal{O}}$ (with interior $\mathcal{O}$) is the state space of the problem. The goal of the investor is to maximize the expected discounted profit obtained by selling the shares. Given $y=(x,p)\in\bar{\mathcal{O}}$ we define $V(y)$, the value function, as such maximum (or supremum), taken over all admissible trading strategies such that $(X_{0-},P_{0-})=Y_{0-}=y$. We call $\beta>0$ the discount factor and $k\geq0$ the transaction cost. Note that we can always do nothing, in which case the expected revenue is 0. Therefore $V\geq0$ for all $y$.

In this formulation we assume that the investor can only trade discretely over the time horizon. This is modeled with the impulse control $\nu=(\tau_n,\zeta_n)_{1\leq n\leq M}$, where the random variable $M<\infty$ is the number of trades, $(\tau_n)$ are stopping times with respect to the filtration $(\mathcal{F}_t)$ such that $0\leq\tau_1\leq\cdots\leq\tau_n\leq\cdots\leq\tau_M\leq\infty$ that represent the times of the investor's trades, and $(\zeta_n)$ are real-valued $\mathcal{F}_{\tau_n}$-measurable random variables that represent the number of shares sold at the intervention times. Note that any control policy $\nu$ fully determines $M$. Given any strategy $\nu$, the dynamics of $X$ are given by
\begin{align}
X_s&=X_{\tau_n},\textrm{  for  }\tau_n\leq s<\tau_{n+1},\label{xdyn1}\\
X_{\tau_{n+1}}&=X_{\tau_n}-\zeta_{n+1}.\label{xdyn2}
\end{align}
We consider price impact functions such that the price goes down when the investor sells shares. Also, the greater the volume of the trade, the grater the impact in the price process. Thus, we let $\alpha(\zeta,p)$ be the post-trade price when the investor trades $\zeta$ shares of the asset at a pre-trade price of $p$. We assume that $\alpha$ is smooth, non-increasing in $\zeta$, and non-decreasing in $p$ such that $\alpha(0,p)=p$ for all $p$. These conditions imply that $\alpha(\zeta,p)\leq p$ for $\zeta\geq0$. Furthermore, we will also assume that for all $\zeta_1,\zeta_2,p\in\mathbb{R_+}$
\begin{equation}\label{suma}
\alpha(\zeta_1,\alpha(\zeta_2,p))=\alpha(\zeta_1+\zeta_2,p).
\end{equation}
This assumption says that the impact in the price of trading twice at the same moment in time is the same as trading the total number of shares once. This assumption will prevent any price manipulation from the investor. Two possible choices for $\alpha$ are:
$$\alpha_1(\zeta,p)=p-\lambda\zeta$$
$$\alpha_2(\zeta,p)=pe^{-\lambda\zeta}$$
where $\lambda>0$. A linear impact like $\alpha_1$ has the drawback that the post-trade price can be negative. Given a price impact $\alpha$ and an admissible strategy $\nu$, the price dynamics are given by:
\begin{align}
dP_s&=\mu(P_s)ds+\sigma(P_s)dB_s,\textrm{  for  }\tau_n\leq s<\tau_{n+1},\label{pdyn1}\\
P_{\tau_n}&=\alpha(\zeta_n,P_{\tau_{n-}}).\label{pdyn2}
\end{align}
When $\tau_n=\tau_{n+1}$, then we apply the impact twice, therefore 
$$P_{\tau_n}=P_{\tau_{n+1}}=\alpha(\zeta_{n+1},\alpha(\zeta_n,P_{\tau_{n-}})).$$
If more that two actions are taken at the same time, we apply the impact accordingly. Now, given $y=(x,p)\in\bar{\mathcal{O}}$ the value function $V$ has the form:
\begin{equation}\label{valuefunctioninf}
V(y)=\sup_{\nu}\mathbb{E}\left[\sum\limits_{n=1}^{M}e^{-\beta\tau_n}(\zeta_nP_{\tau_n}-k)\right].
\end{equation}
As usual, we assume that $e^{-\beta\tau}=0$ on $\{\tau=\infty\}$.

\subsection{Hamilton-Jacobi-Bellman equation}

In order to characterize the value function we will use the dynamic programming approach. This principle has been proved for several frameworks and types of control. Some of the references that prove it in a fairly general context are \citep{Ishikawa,Ma}. We have that the following Dynamic Programming Principle (DPP) holds: For all $y=(x,p)\in\mathcal{O}$ we have
\begin{equation}\label{dppinf}
V(y)=\sup_{\nu}\mathbb{E}\left[\sum_{\tau_n\leq\tau}e^{-\beta\tau_n}(\zeta_nP_{\tau_n}-k)+e^{-\beta\tau}V(Y_{\tau})\right],
\end{equation}
where $\tau$ is any stopping time. Let's define the impulse transaction function as
$$\Gamma(y,\zeta)=(x-\zeta,\alpha(\zeta,p))$$
for all $y\in\bar{\mathcal{O}}$ and $\zeta\in\mathbb{R}$. This corresponds to the change in the state variables when a trade of $\zeta$ shares has taken place. We define the intervention operator as
$$\mathcal{M}\varphi(y)=\sup_{0\leq\zeta\leq x}\varphi(\Gamma(y,\zeta))+\zeta\alpha(\zeta,p)-k,$$
for any measurable function $\varphi$.  Also, let's define the infinitesimal generator operator associated with the price process when no trading is done, that is
$$A\varphi=\mu(p)\frac{\partial\varphi}{\partial p}+\frac{1}{2}\sigma(p)^2\frac{\partial^2\varphi}{\partial p^2},$$
for any  function $\varphi\in C^2(\mathcal{O})$. The HJB equation that follows from the DPP is then (\citep{oeksendal05acjd})
\begin{equation}\label{hjbinf}
\min\left\{\beta\varphi-A\varphi,\varphi-\mathcal{M}\varphi\right\}=0\textrm{ in } \mathcal{O}.
\end{equation}
We call the continuation region to 
$$\mathcal{C}=\{y\in\mathcal{O}:\mathcal{M}\varphi-\varphi<0\}$$
and the trade region to
$$\mathcal{T}=\{y\in\mathcal{O}:\mathcal{M}\varphi-\varphi=0\}.$$

\subsection{Growth Condition}\label{grosec}

We will define a particular optimal stopping problem and use some of the results in \citep{dayanik} to establish an upper bound on the value function $V$ and therefore a growth condition. Consider the case where there is no price impact, that is, $\alpha(\zeta,p)=p$ for all $\zeta\geq0$.  We define
\begin{equation}\label{noimpact}
V_{NI}(y)=\sup_{\nu}\mathbb{E}\left[\sum\limits_{n=1}^{M}e^{-\beta\tau_n}(\zeta_nP_{\tau_n}-k)\right],
\end{equation}
where $P_s$ follows the unperturbed price process. It is clear that $V\leq V_{NI}$. When there is no price impact, the investor would need to trade only one time.
\begin{prop}\label{growprop}
For all $y\in\mathcal{O}$
\begin{equation}\label{optstoprop}
V_{NI}(x,p)= U(x,p):=\sup_{\tau}\mathbb{E}[e^{-\beta\tau}(xP_{\tau}-k)^+]
\end{equation}
where the supremum is taken over all stopping times with respect to the filtration $(\mathcal{F}_s)$.
\end{prop}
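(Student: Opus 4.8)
The plan is to establish the two inequalities $V_{NI}\le U$ and $V_{NI}\ge U$ separately. The direction $V_{NI}\ge U$ is the easy one. Given an arbitrary stopping time $\tau$, I would construct an admissible one-trade strategy that liquidates the whole position at $\tau$, but only when doing so is profitable: set $\zeta_1=x$, let $\tau_1=\tau$ on the event $A=\{xP_\tau>k\}$ and $\tau_1=\infty$ on $A^{c}$. Since $A\in\mathcal F_\tau$, the time $\tau_1$ is again a stopping time and the strategy is admissible; with the convention $e^{-\beta\tau_1}=0$ on $\{\tau_1=\infty\}$, its payoff in \eqref{noimpact} is exactly $\mathbb E[e^{-\beta\tau}(xP_\tau-k)^+]$. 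Taking the supremum over $\tau$ yields $V_{NI}(x,p)\ge U(x,p)$.

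For the reverse inequality I would fix an arbitrary admissible strategy $\nu=(\tau_n,\zeta_n)$; admissibility forces $\zeta_n\ge0$ and the budget constraint $\sum_n\zeta_n\le x$. The argument rests on three pointwise reductions followed by one martingale estimate. First, $\zeta_nP_{\tau_n}-k\le(\zeta_nP_{\tau_n}-k)^+$ termwise. Second, since $\zeta\mapsto(\zeta P-k)^+$ is convex and vanishes at $\zeta=0$, the chord bound gives $(\zeta_nP_{\tau_n}-k)^+\le\frac{\zeta_n}{x}(xP_{\tau_n}-k)^+$ for $0\le\zeta_n\le x$; this is the step that turns each trade into a $\zeta_n$-proportional fraction of the ``sell-everything'' reward and lets me exploit $\sum_n\zeta_n\le x$. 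Third, taking $\tau=0$ in \eqref{optstoprop} gives $(xP_{\tau_n}-k)^+\le U(x,P_{\tau_n})$. Writing $Q_t=e^{-\beta t}U(x,P_t)$ and combining the three bounds, I obtain
\[
\mathbb E\Big[\sum_n e^{-\beta\tau_n}(\zeta_nP_{\tau_n}-k)\Big]\le\frac1x\,\mathbb E\Big[\sum_n\zeta_nQ_{\tau_n}\Big].
\]

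It then remains to prove the budgeted estimate $\mathbb E[\sum_n\zeta_nQ_{\tau_n}]\le x\,Q_0=x\,U(x,p)$, and this is where I expect the genuine work to lie. The crucial input is that $Q_t=e^{-\beta t}U(x,P_t)$ is a nonnegative supermartingale, which is the Markovian Snell-envelope property of the optimal stopping value in \eqref{optstoprop}; I would invoke the one-dimensional optimal stopping theory of \citep{dayanik} together with the regularity and integrability of $U$ (note that finiteness of $U$, i.e.\ well-posedness of the stopping problem, is precisely the growth information being established here, so I would keep the standing integrability hypotheses in view). Granting this, the estimate follows by induction on the trades via optional sampling: conditionally on $\mathcal F_{\tau_1}$ the shifted process is again a nonnegative supermartingale and the remaining trades carry budget $x-\zeta_1$, so $\mathbb E[\sum_{n\ge2}\zeta_nQ_{\tau_n}\mid\mathcal F_{\tau_1}]\le(x-\zeta_1)Q_{\tau_1}$, whence $\mathbb E[\sum_n\zeta_nQ_{\tau_n}\mid\mathcal F_{\tau_1}]\le xQ_{\tau_1}$ and $\mathbb E[xQ_{\tau_1}]\le xQ_0$ (nonnegativity makes sampling at possibly infinite stopping times legitimate). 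The main obstacles are therefore martingale-theoretic rather than algebraic: verifying the supermartingale property of $Q$ and justifying optional sampling at the unbounded stopping times $\tau_n$.
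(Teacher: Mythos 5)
Your proposal is correct, and it rests on the same two pillars as the paper's own argument---the pointwise bound $(xp-k)^{+}\leq U(x,p)$ obtained by taking $\tau=0$, and the supermartingale property of $e^{-\beta s}U(x,P_s)$ (which the paper cites from \citep{oksenreik}, while you source it from the one-dimensional theory of \citep{dayanik})---but your bookkeeping is genuinely different, and in one place more complete. The paper proves $V_{NI}\leq U$ by induction on the number of trades at the level of the value function itself: it bounds the first trade by $\mathbb{E}[e^{-\beta\tau_1}U(\zeta_1,P_{\tau_1})]$, the remaining trades by $\mathbb{E}[e^{-\beta\tau_1}U(x-\zeta_1,P_{\tau_1})]$, and then passes to $\mathbb{E}[e^{-\beta\tau_1}U(x,P_{\tau_1})]$, a step that tacitly uses the superadditivity $U(\zeta_1,p)+U(x-\zeta_1,p)\leq U(x,p)$ without justification. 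Your chord bound $(\zeta p-k)^{+}\leq\frac{\zeta}{x}(xp-k)^{+}$ for $0\leq\zeta\leq x$ is precisely the estimate that proves this omitted inequality, and by applying it once at the outset you trade the paper's induction over value functions with varying first argument for a linear budget estimate $\mathbb{E}\left[\sum_n\zeta_nQ_{\tau_n}\right]\leq xQ_0$ involving the single supermartingale $Q_t=e^{-\beta t}U(x,P_t)$, which is arguably cleaner and isolates the martingale-theoretic content (optional sampling at possibly infinite stopping times, legitimate by nonnegativity) exactly where you say it lies. Your easy direction is also more careful than the paper's: the paper simply observes that $(\tau,x)$ is admissible, which only bounds $\sup_\tau\mathbb{E}[e^{-\beta\tau}(xP_\tau-k)]$, whereas your restriction of $\tau_1$ to the event $\{xP_\tau>k\}$ is what actually recovers the positive part in \eqref{optstoprop}. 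One small point you gloss: your induction treats a fixed number of trades, while an admissible $\nu$ has a random finite number $M$; since after your three reductions the terms $\zeta_nQ_{\tau_n}$ are nonnegative, monotone convergence over truncated strategies closes this (the paper handles the analogous passage by invoking Lemma 7.1 of \citep{oeksendal05acjd}). This is a presentational gap, not a mathematical one.
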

\begin{proof}
Since $(\tau,x)$ is an admissible strategy for any stopping time $\tau$, then $U\leq V_{NI}$. Now, let $\Upsilon_n$ the set of admissible strategies with at most $n$ interventions. The proof will continue by induction in $n$ to show that for all $n$
\begin{equation}\label{induc}
\sup\limits_{\nu\in\Upsilon_n}\mathbb{E}\left[\sum\limits_{i=1}^{n}e^{-\beta\tau_i}(\zeta_iP_{\tau_i}-k)\right]\leq U(y).
\end{equation}
Clearly \eqref{induc} is true for $n=1$. Let $\nu\in\Upsilon_n$. Note that $xp-k\leq U(x,p)$, therefore, conditioning on $\mathcal{F}_{\tau_1}$ we have
\begin{align*}
\mathbb{E}\left[\sum\limits_{i=1}^{n}e^{-\beta\tau_i}(\zeta_iP_{\tau_i}-k)\right]=&\mathbb{E}\left[\mathbb{E}\left[\left.[e^{-\beta\tau_1}(\zeta_1P_{\tau_1}-k)\right|\mathcal{F}_{\tau_1}\right]\right]+\\
&\mathbb{E}\left[\mathbb{E}\left[\left.e^{-\beta\tau_1}\sum\limits_{i=2}^{n}e^{-\beta(\tau_i-\tau_1)}(\zeta_iP_{\tau_i}-k)\right|\mathcal{F}_{\tau_1}\right]\right]\\
\leq&\mathbb{E}\left[\mathbb{E}\left[\left.e^{-\beta\tau_1}U(\zeta_1,P_{\tau_1})\right|\mathcal{F}_{\tau_1}\right]\right]+\\
&\mathbb{E}\left[\mathbb{E}\left[\left.e^{-\beta\tau_1}U(x-\zeta_1,P_{\tau_1})\right|\mathcal{F}_{\tau_1}\right]\right]\\
\leq&\mathbb{E}\left[e^{-\beta\tau_1}U(x,P_{\tau_1})\right]\\
\leq&U(x,p),
\end{align*}
where the last inequality follows from the fact that the process $e^{-\beta s}U(x,P_s)$ is a supermartingale (\citep{oksenreik}). This proves \eqref{induc}. By Lemma 7.1 in \citep{oeksendal05acjd}, the left hand side of \eqref{induc} converges to $V_{NI}$ as $n\rightarrow\infty$ and the proof is complete.
\end{proof}
From the previous result we have the bound
\begin{equation}\label{optstop}
0\leq V(x,p)\leq U(x,p)=\sup_{\tau}\mathbb{E}[e^{-\beta\tau}(xP_{\tau}-k)],
\end{equation}
where the supremum is taken over all stopping times with respect to the filtration $(\mathcal{F}_t)$. Following section 5 in \citep{dayanik}, let $\psi$ and $\phi$ be the unique, up to multiplication by a positive constant, strictly increasing and strictly decreasing (respectively) solutions of the ordinary differential equation $Au=\beta u$ and such that $0\leq\psi(0+)$ and $\psi(p)\rightarrow\infty$ as $p\rightarrow\infty$. For any $x\geq0$, let
\begin{equation}\label{ll}
\ell_x=\lim\limits_{p\rightarrow\infty}\frac{(xp-k)^+}{\psi(p)}.
\end{equation}
Then $U$ is finite in $\mathcal{O}$ if and only if $\ell_x$ is finite for all $x\geq0$. Furthermore, when $U$ is finite we also have that for some $C>0$
\begin{equation}\label{boundinf}
U(x,p)\leq Cx\psi(p)
\end{equation}
and
\begin{equation}
\lim\limits_{p\rightarrow\infty}\frac{U(x,p)}{\psi(p)}=\ell_x.
\end{equation}
We will assume that $U$ is finite.

\subsection{Boundary Condition}\label{bousec}

Since the investor is not allowed to purchase shares of the asset we have that $V(0,p)=0$ for all $p\geq0$. Also, the price process gets absorbed at 0, therefore $V=0$ on $\partial\mathcal{O}$. If we assume that $U$ is finite then by \eqref{boundinf} we have that $V(x,p)\rightarrow0$ as $x\rightarrow0$ for all $p\geq0$, that is, $V$ is continuous on $\{x=0\}$. Now we distinguish two cases:\\
\begin{enumerate}
\item 0 is an absorbing boundary for the price process $P$. This means that for any $p>0$, $\mathbb{P}(P_t=0 \textrm{ for some }t>0|P_0=p)>0$. A simple example is the arithmetic Brownian motion. Since the process is stopped at 0, we must have that for all $x\geq0$
\begin{equation*}\label{absbound}
U(x,0)=0.
\end{equation*}
Also, \citep{dayanik} shows that in this case $U$ is continuous at $\{p=0\}$ whenever U is finite. Therefore the boundary conditions for the value function $V$ are
\begin{equation}\label{boundaryinfab}
V=0\textrm{ on } \partial\mathcal{O} \textrm{ and }\lim_{y'\rightarrow y}V(y')=0 \textrm{ for all }y\in\partial\mathcal{O}.
\end{equation}
\item 0 is a natural boundary for the price process $P$. This means that for any $p>0$, $\mathbb{P}(P_t=0 \textrm{ for some }t>0|P_0=p)=0$. For example the geometric Brownian motion. In this case we can have different situations in $V(x,p)$ as $p$ goes to 0 depending on the price process. In particular, we can have the situation where $V$ is discontinuous on the set $\{p=0\}$.
\end{enumerate}

\subsection{Viscosity solution}

We now are going to prove that the value function is a viscosity solution of the HJB equation \eqref{hjbinf} and find the appropriate conditions that make this value function unique. The appropriate notion of solution of the HJB equation \eqref{hjbinf} is the notion of discontinuous viscosity solution since we cannot know a priori if the value function is continuous in $\mathcal{O}$. We must first state some definitions.
\begin{definition}
Let $W$ be an extended real-valued function on some open set $\mathcal{D}\subset\mathbb{R}^n$.
\begin{enumerate}
\item[(i)] The \emph{upper semi-continuous envelope} of $W$ is
$$W^{*}(x)=\lim_{r\downarrow0}\sup_{{|x'-x|\leq r\atop x'\in\mathcal{D}}}W(x')\textrm{,   }\forall x\in\mathcal{D}.$$
\item[(ii)] The \emph{lower semi-continuous envelope} of $W$ is
$$W_{*}(x)=\lim_{r\downarrow0}\inf_{{|x'-x|\leq r\atop x'\in\mathcal{D}}}W(x')\textrm{,   }\forall x\in\mathcal{D}.$$
\end{enumerate}
\end{definition}

Note that $W^*$ is the smallest upper semi-continuous function which is greater than or equal to $W$, and similarly for $W_*$. Now we define discontinuous viscosity solutions.

\begin{definition}\label{defviscosity}
Given an equation of the form
\begin{equation}\label{hjbgen}
\min\left\{F(x,\varphi(x),D\varphi(x),D^2\varphi(x)),\varphi-\mathcal{M}\varphi\right\}=0\textrm{ in } \mathcal{D},
\end{equation}
a locally bounded function $W$ on $\mathcal{D}$ is a:
\begin{enumerate}
\item[(i)] \emph{Viscosity subsolution} of \eqref{hjbgen} in $\mathcal{D}$ if for each $\varphi\in C^{2}(\bar{\mathcal{D}})$,
$$\min\left\{F(x_0,W(x_0),D\varphi(x_0),D^2\varphi(x_0)),W^*(x_0)-\mathcal{M}W^*(x_0)\right\}\leq0$$ 
at every $x_0\in\mathcal{D}$ which is a maximizer of $W^*-\varphi$ on $\bar{\mathcal{D}}$ with $W^*(x_0)=\varphi(x_0)$.

\item[(ii)] \emph{Viscosity supersolution} of \eqref{hjbgen} in $\mathcal{D}$ if for each $\varphi\in C^{2}(\bar{\mathcal{D}})$,
$$\min\left\{F(x_0,W(x_0),D\varphi(x_0),D^2\varphi(x_0)),W_*(x_0)-\mathcal{M}W_*(x_0)\right\}\geq0$$ 
at every $x_0\in\mathcal{D}$ which is a minimizer of $W_*-\varphi$ on $\bar{\mathcal{D}}$ with $W_*(x_0)=\varphi(x_0)$.

\item[(iii)] \emph{Viscosity solution} of \eqref{hjbgen} in $\mathcal{D}$ if it is both a viscosity subsolution and a viscosity supersolution of \eqref{hjbgen} in $\mathcal{D}$.
\end{enumerate}
\end{definition}

We are now ready for the following theorem:

\begin{theorem}\label{viscosity}
The value function $V$ defined by \eqref{valuefunctioninf} is a viscosity solution of \eqref{hjbinf} in $\mathcal{O}$.
\end{theorem}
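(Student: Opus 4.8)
The plan is to establish separately that the upper semicontinuous envelope $V^*$ is a viscosity subsolution and that the lower semicontinuous envelope $V_*$ is a viscosity supersolution of \eqref{hjbinf}; local boundedness of $V$ (hence finiteness of $V^*,V_*$) is guaranteed by the growth bound \eqref{boundinf}, so the envelopes are well defined. Throughout, the engine is the DPP \eqref{dppinf} together with Dynkin's formula applied to $e^{-\beta s}\varphi(Y_s)$ for a test function $\varphi\in C^2$, and two elementary facts: the always-valid inequality $V\geq\mathcal{M}V$ on $\mathcal{O}$ (trade $\zeta$ immediately and then act optimally, then take the supremum over $0\leq\zeta\leq x$), and the observation that, since the correspondence $y\mapsto[0,x]$ is continuous and $\alpha$ is smooth, $\mathcal{M}$ sends USC functions to USC functions and LSC functions to LSC functions.

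Supersolution. Fix $\varphi\in C^2(\bar{\mathcal O})$ and a minimizer $y_0\in\mathcal O$ of $V_*-\varphi$ with $V_*(y_0)=\varphi(y_0)$, so $\varphi\leq V_*\leq V$ on a ball $B_\eta(y_0)$ with equality at $y_0$. The impulse inequality $V_*(y_0)-\mathcal{M}V_*(y_0)\geq0$ follows from $V\geq\mathcal{M}V\geq\mathcal{M}V_*$ by passing to LSC envelopes and using that $\mathcal{M}V_*$ is LSC. For the differential inequality, pick $y_m\to y_0$ with $V(y_m)\to V_*(y_0)$, apply the DPP to the admissible ``do nothing until $\theta_m$'' policy, where $\theta_m$ is the first exit time of the uncontrolled process from $B_\eta(y_0)$ (capped by a constant), to get $V(y_m)\geq\mathbb E[e^{-\beta\theta_m}V(Y_{\theta_m})]\geq\mathbb E[e^{-\beta\theta_m}\varphi(Y_{\theta_m})]$, and then Dynkin's formula turns this into $V(y_m)-\varphi(y_m)\geq\mathbb E[\int_0^{\theta_m}e^{-\beta s}(A\varphi-\beta\varphi)(Y_s)\,ds]$. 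Since the left side tends to $0$, if we had $\beta\varphi(y_0)-A\varphi(y_0)<0$ then $A\varphi-\beta\varphi$ would be bounded below by a positive constant on $B_\eta$, forcing the right side above a positive constant (because $\mathbb E[1-e^{-\beta\theta_m}]$ is bounded away from $0$ for $y_m$ near the interior point $y_0$), a contradiction.

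Subsolution. Fix a maximizer $y_0\in\mathcal O$ of $V^*-\varphi$ with $V^*(y_0)=\varphi(y_0)$ and suppose, for contradiction, that both terms in \eqref{hjbinf} are strictly positive at $y_0$. By USC of $V^*$ and of $\mathcal{M}V^*$ together with continuity of the coefficients, there are $\eta,\epsilon>0$ with $\beta\varphi-A\varphi\geq\epsilon$ and $V^*-\mathcal{M}V^*\geq\epsilon$ on $B_\eta(y_0)$; take $y_m\to y_0$ with $V(y_m)\to V^*(y_0)$, an $\epsilon_m$-optimal control $\nu_m$ with $\epsilon_m\downarrow0$, and let $\theta_m$ be its exit time from $B_\eta(y_0)$. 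The key point is that the gap makes interventions strictly unprofitable inside the ball: at any intervention time $\tau_n<\theta_m$ the reward plus post-trade continuation value is bounded by $e^{-\beta\tau_n}\mathcal{M}V(Y_{\tau_n^-})\leq e^{-\beta\tau_n}(V^*(Y_{\tau_n^-})-\epsilon)$, so an $\epsilon_m$-optimal control with $\epsilon_m<\epsilon$ effectively makes no trade before exiting $B_\eta$. Reducing in this way to the no-trade continuation, the DPP upper bound together with $V\leq V^*\leq\varphi$ and Dynkin's formula yields $V(y_m)-\varphi(y_m)\leq\mathbb E[\int_0^{\theta_m}e^{-\beta s}(A\varphi-\beta\varphi)(Y_s)\,ds]+\epsilon_m$; the left side tends to $0$ while the right side is at most $-\tfrac{\epsilon}{2\beta}\mathbb E[1-e^{-\beta\theta_m}]+\epsilon_m$, which stays negative, the desired contradiction. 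Hence at least one of the two terms is $\leq0$ at $y_0$.

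The routine parts are the Dynkin estimates and the lower bound on $\mathbb E[1-e^{-\beta\theta_m}]$ for interior points. I expect the genuine obstacle to be the nonlocal impulse term: making precise that a strictly positive gap $V^*-\mathcal{M}V^*$ forces near-optimal controls to refrain from trading inside a small ball (so that the problem locally reduces to the diffusive continuation), and the accompanying semicontinuity statements for $\mathcal{M}V^*$ and $\mathcal{M}V_*$, which must be verified carefully so that the envelopes interact correctly with the intervention operator.
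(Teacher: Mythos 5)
Your proposal runs on the same engine as the paper's proof (the DPP \eqref{dppinf}, Dynkin's formula applied to $e^{-\beta s}\varphi(Y_s)$, and semicontinuity properties of the intervention operator), and the supersolution half is essentially identical to the paper's, including the passage $V\geq\mathcal{M}V\geq\mathcal{M}V_*$ and the lsc property of $\mathcal{M}V_*$, which the paper verifies in-line using continuity of $\Gamma$. The subsolution half is organized as the contrapositive of the paper's: you assume both terms of \eqref{hjbinf} strictly positive at $y_0$ and penalize every intervention inside a small ball by the gap $\epsilon$, whereas the paper assumes only $\beta\varphi-A\varphi\geq\theta$ near $y_0$, uses the Dynkin estimate to show that $1/n$-optimal controls must intervene immediately in the limit ($\mathbb{P}[\tau_1^n=0]\rightarrow1$), and then concludes $V^*(y_0)\leq(\mathcal{M}V)^*(y_0)\leq\mathcal{M}V^*(y_0)$ via the envelope inequality proved in Appendix \ref{lemita}. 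Both routes hinge on the same nonlocal ingredient --- usc bounds of the type $(\mathcal{M}u)^*\leq\mathcal{M}u^*$, obtained from compactness of $[0,x]$ and continuity of $\alpha$ --- which you correctly flagged as the delicate point.

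Two repairs are needed to make your subsolution argument airtight. First, the claim that ``$V^*-\mathcal{M}V^*\geq\epsilon$ on $B_\eta(y_0)$'' does not follow from semicontinuity: it is a difference of two usc functions, and $V^*$ may drop well below $V^*(y_0)$ arbitrarily close to $y_0$. What does follow --- and is all your estimate actually uses, since you bound continuation values through $\varphi$ anyway --- is $\mathcal{M}V\leq\mathcal{M}V^*\leq\varphi-\epsilon$ on a small ball, obtained from usc of $\mathcal{M}V^*$, continuity of $\varphi$, $\mathcal{M}V^*(y_0)\leq V^*(y_0)-2\epsilon=\varphi(y_0)-2\epsilon$, and the global bound $V^*\leq\varphi$ supplied by Definition \ref{defviscosity}. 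Second, ``an $\epsilon_m$-optimal control effectively makes no trade before exiting'' is too loose as stated (a near-optimal control may well trade inside the ball, just at a quantifiable cost); the rigorous version applies the DPP at $\tau=\theta_m\wedge\tau_1^m$ and Dynkin up to $\tau$, yielding a deficit that is pathwise at least $\epsilon\min(1,1/\beta)\left[(1-e^{-\beta\tau})+e^{-\beta\tau}1_{\{\tau=\tau_1^m\}}\right]\geq\epsilon\min(1,1/\beta)\left(1-e^{-\beta\tilde{\theta}_m}\right)$, where $\tilde{\theta}_m$ is the exit time of the uncontrolled process; its expectation is uniformly positive as $m\rightarrow\infty$ by the same a.s.\ continuity argument the paper invokes in its supersolution step ($\eta=\lim_m\mathbb{E}[e^{-\beta\tilde{\theta}_m}]<1$), which closes your contradiction without the informal reduction. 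With these two amendments your proof is complete and is a legitimate, slightly reorganized variant of the paper's argument.
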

\begin{proof} By the bounds given in the section \ref{grosec}, it is clear that $V$ is locally bounded. Now we show the viscosity solution property.

Subsolution property: Let $y_0\in\mathcal{O}$ and $\varphi\in C^{2}(\mathcal{O})$ such that $y_0$ is a maximizer of $V^*-\varphi$ on $\mathcal{O}$ with $V^*(y_0)=\varphi(y_0)$. Now suppose that there exists $\theta>0$ and $\delta>0$ such that
\begin{equation}\label{contr2inf}
-\beta\varphi(y)+A\varphi(y)\leq-\theta
\end{equation}
for all $y\in\mathcal{O}$ such that $|y-y_0|<\delta$. Let $(y_n)$ be a sequence in $\mathcal{O}$ such that $y_n\rightarrow y_0$ and
$$\lim_{n\rightarrow\infty}V(y_n)=V^*(y_0).$$ 
By the dynamic programming principle \eqref{dppinf}, for all $n\geq1$ there exist an admissible control $\nu_n=(\tau^n_m,\zeta^n_m)_m$ such that for any stopping time $\tau$ we have that
\begin{equation}\label{approxinf}
V(y_n)\leq\mathbb{E}\left[\sum_{\tau^n_m\leq\tau}e^{-\beta\tau^n_m}(\zeta^n_mP^n_{\tau^n_m}-k)+e^{-\beta\tau}V(Y^n_{\tau})\right]+\frac{1}{n},
\end{equation}
where $Y_s^n$ is the process controlled by $\nu_n$ for $s\geq 0$. Now consider the stopping time
$$T_n=\inf\{s\geq 0:|Y_s^n-y_0|\geq\delta\}\wedge\tau_1^n,$$
where $\tau_1^n$ is the first intervation time of the impulse control $\nu_n$. 
By \eqref{approxinf} we have that
\begin{align}
V(y_n)&\leq\mathbb{E}\left[e^{-\beta T_n}V(Y^n_{T_n})1_{\{T_n<\tau_1^n\}}\right]+\mathbb{E}\left[e^{-\beta T_n}\left(\zeta_1^nP^n_{\tau_1^n}-k+V(Y^n_{\tau_1^n})\right)1_{\{T_n=\tau_1^n\}}\right]+\frac{1}{n}\nonumber\\
&\leq\mathbb{E}\left[e^{-\beta T_n}V(Y^n_{T_n-})1_{\{T_n<\tau_1^n\}}\right]+\mathbb{E}[e^{-\beta T_n}\mathcal{M}V(Y^n_{\tau_1^n-})1_{\{T_n=\tau_1^n\}}]+\frac{1}{n}\label{aux2inf}\\
&\leq\mathbb{E}\left[e^{-\beta T_n}V(Y^n_{T_n-})\right]+\frac{1}{n}\label{aux3inf}
\end{align}
Now, by Dynkin's formula and \eqref{contr2inf} we have 
\begin{align*}
\mathbb{E}[e^{-\beta T_n}\varphi(Y^n_{T_n-})]&=\varphi(y_n)+\mathbb{E}\left[\int_{0}^{T_n}e^{-\beta s}\left (-\beta \varphi(Y_s^n)+A\varphi(Y_s^n)\right)ds\right]\\
&\leq\varphi(y_n)-\frac{\theta}{\beta}(1-\mathbb{E}[e^{-\beta T_n}]).
\end{align*}
Since $V\leq V^*\leq \varphi$ and $T_n\leq\tau_1^n$, by \eqref{aux3inf}
$$V(y_n)\leq\varphi(y_n)-\frac{\theta}{\beta}(1-\mathbb{E}[e^{-\beta T_n}])+\frac{1}{n},$$
for all $n$. Letting $n$ go to infinity we have that 
$$\lim_{n\rightarrow\infty}\mathbb{E}[e^{-\beta T_n}]=1,$$ 
which implies that
$$\lim_{n\rightarrow\infty}\mathbb{P}[\tau^n_1=0]=1.$$
Combining the above with \eqref{aux2inf} when we let $n\rightarrow\infty$ we get
$$V^*(y_0)\leq\sup_{|y'-y_0|<\delta}\mathcal{M}V(y').$$
Since this is true for all $\delta$ small enough, then sending $\delta$ to 0 we have
$$V^*(y_0)\leq(\mathcal{M}V)^*(y_0).$$
If we show that $(\mathcal{M}V)^*\leq\mathcal{M}V^*$, then we would have proved that if $-\beta\varphi(y_0)+A\varphi(y_0)<0$, then $\mathcal{M}V^*(y_0)-V^*(y_0)\geq0$ and therefore
$$\min\left\{\beta\varphi(y_0)-A\varphi(y_0),V^*(y_0)-\mathcal{M}V^*(y_0)\right\}\leq0.$$
Appendix \ref{lemita} contains the proof of this last fact.

Supersolution property: Let $y_0\in\mathcal{O}$ and $\varphi\in C^{2}(\mathcal{O})$ such that $y_0$ is a minimizer of $V_*-\varphi$ on $\mathcal{O}$ with $V_*(y_0)=\varphi(y_0)$. By definition of $V$ and $\mathcal{M}V$ we have that $\mathcal{M}V\leq V$ on $\mathcal{O}$ and therefore $(\mathcal{M}V)_*\leq V_*$. Let $(y_n)$ be a sequence in $\mathcal{O}$ such that $y_n\rightarrow y_0$ and
$$\lim_{n\rightarrow\infty}V(y_n)=V_*(y_0).$$
Now, since $V_*\leq V$ is lower semi-continuous and $\Gamma$ is continuous we have
\begin{align*}
\mathcal{M}V_*(y_0)&=\sup_{0\leq\zeta\leq x_0}V_*(\Gamma(y_0,\zeta))+\zeta\alpha(\zeta,p_0)-k\\
&\leq\sup_{0\leq\zeta\leq x_0}\liminf_{n\rightarrow\infty}V(\Gamma(y_n,\zeta))+\zeta\alpha(\zeta,p_n)-k\\
&\leq\liminf_{n\rightarrow\infty}\sup_{0\leq\zeta\leq x_n}V(\Gamma(y_n,\zeta))+\zeta\alpha(\zeta,p_n)-k\\
&\leq\lim_{n\rightarrow\infty}\mathcal{M}V(y_n)\\
&=(\mathcal{M}V)_*(y_0).
\end{align*}
Hence $\mathcal{M}V_*(y_0)\leq(\mathcal{M}V)_*(y_0)\leq V_*(y_0)$. Now suppose that there exists $\theta>0$ and $\delta>0$ such that
\begin{equation}\label{contr1inf}
\beta\varphi(y)-A\varphi(y)\leq-\theta
\end{equation}
for all $y\in\mathcal{O}$ such that $|y-y_0|<\delta$. Fix $n$ large enough such that $|y_n-y_0|<\delta$ and consider the process $Y_s^n$ for $s\geq 0$ with no intervention such that $Y^n_{0}=y_n$. Let 
$$T_n=\inf\{s\geq 0:|Y_s^n-y_0|\geq\delta\}.$$
Now, by Dynkin's formula and \eqref{contr1inf} we have 
\begin{align*}
\mathbb{E}[e^{-\beta T_n}\varphi(Y^n_{T_n})]&=\varphi(y_n)+\mathbb{E}\left[\int_{0}^{T_n}e^{-\beta s}\left (-\beta \varphi(Y_s^n)+A\varphi(Y_s^n)\right)ds\right]\\
&\geq\varphi(y_n)+\frac{\theta}{\beta}(1-\mathbb{E}[e^{-\beta T_n}]).
\end{align*}
On the other hand, $\varphi\leq V_*\leq V$ and using the dynamic programming principle \eqref{dppinf} we have
$$\mathbb{E}[e^{-\beta T_n}\varphi(Y^n_{T_n})]\leq\mathbb{E}[e^{-\beta T_n}V(Y^n_{T_n})]\leq V(y_n).$$
Notice that $\eta:=\lim\limits_{n\rightarrow\infty}\mathbb{E}[e^{-\beta T_n}]<1$ since $T_n>0$ a.s by a.s continuity of the processes $Y_s^n$, then by the above two inequalities and taking $n\rightarrow\infty$, we have that
$$V_*(y_0)\geq\varphi(y_0)+\frac{\theta}{\beta}(1-\eta)>\varphi(y_0)$$
contradicting the fact that $V_*(y_0)=\varphi(y_0)$. This establishes the supersolution property.
\end{proof}

\subsection{Uniqueness}

Let $\psi$ be defined as before and let's assume that the function $U$ defined in \eqref{optstop} is finite. Also assume that the transaction cost $k>0$. Then, we want to prove that $V$ is the unique viscosity solution of the equation \eqref{hjbinf} that is bounded by $U$. We will need an additional assumption about the function $\psi$: For all $x\geq0$
\begin{equation}\label{psi}
\lim\limits_{p\rightarrow\infty}\frac{U(x,p)}{\psi(p)}=\ell_x=0.
\end{equation}
Following the ideas in \citep{crisli,ishii2} let $u$ be an upper semi-continuous (usc) viscosity subsolution of the HJB equation \eqref{hjbinf} and $v$ be a lower semi-continuous (lsc) viscosity supersolution of the same equation in $\mathcal{O}$, such that they are bounded by $U$ and
\begin{equation}\label{boundcompinf}
\limsup_{y'\rightarrow y}u(y')\leq\liminf_{y'\rightarrow y}v(y')\textrm{ for all }y\in\partial \mathcal{O}.
\end{equation}
Define $$v_m(x,p)=v(x,p)+\frac{1}{m}x^2\psi(p)$$
for all $m\geq1$. Then $v_m$ is still lsc and clearly $\beta v_m-\mathcal{A}v_m\geq0$ by definition of $\psi$. Now,
\begin{align*}
\mathcal{M}v_m(x,p)&=\sup_{0\leq\zeta\leq x}v(x-\zeta,\alpha(\zeta,p))+\frac{1}{m}(x-\zeta)^2\psi(\alpha(\zeta,p))+\zeta\alpha(\zeta,p)-k\\
&\leq\sup_{0\leq\zeta\leq x}v(x-\zeta,\alpha(\zeta,p))+\zeta\alpha(\zeta,p)-k+\sup_{0\leq\zeta\leq x}\frac{1}{m}(x-\zeta)^2\psi(\alpha(\zeta,p))\\
&=\mathcal{M}v(x,p)+\frac{1}{m}x^2\psi(p)\\
&\leq v(x,p)+\frac{1}{m}x^2\psi(p)=v_m(x,p).
\end{align*}
Therefore $v_m$ is supersolution of \eqref{hjbinf}. Now, by the growth condition of $u$ and $v$ and equations \eqref{boundinf} and \eqref{psi} we get
\begin{equation}\label{growauxinf}
\lim_{|y|\rightarrow\infty}(u-v_m)(y)=-\infty.
\end{equation}
We will show now that
\begin{equation}\label{compinf}
u\leq v\textrm{ in }\mathcal{O}.
\end{equation}
It is sufficient to show that $\sup\limits_{y\in\bar{\mathcal{O}}}(u-v_m)\leq0$ for all $m\geq1$ since the result is obtained by letting $m\rightarrow\infty$. Suppose that there exists $m\geq1$ such that $\eta=\sup\limits_{y\in\bar{\mathcal{O}}}(u-v_m)>0$. Since $u-v_m$ is usc, by \eqref{growauxinf} and \eqref{boundcompinf} there exist $y_0\in\mathcal{O}$ such that $\eta=(u-v_m)(y_0)$. Let $y_0=(x_0,p_0)$ be the one with minimum norm over all possible maximizers of $u-v_m$. For $i\geq1$, define 
$$\phi_i(y,y')=\frac{i}{2}|y-y'|^4+|y-y_0|^4,$$
$$\Phi_i(y,y')=u(y)-v_m(y')-\phi_i(y,y').$$
Let
$$\eta_i=\sup_{|y|,|y'|\leq|y_0|}\Phi_i(y,y')=\Phi_i(y_i,y'_i).$$
Clearly $\eta_i\geq\eta$. Then, this inequality reads $\frac{i}{2}|y_i-y'_i|^4+|y_i-y_0|^4\leq u(y_i)-v_m(y'_i)-(u-v_m)(y_0)$. Since $|y_i|,|y'_i|\leq |y_0|$ and $u$ and $-v_m$ are bounded above in that region, this implies that $y_i,y'_i\rightarrow y_0$ and $\frac{i}{2}|y_i-y'_i|^4\rightarrow0$ (along a subsequence) as $i\rightarrow\infty$. We also find that $\eta_i\rightarrow\eta$, $u(y_i)-v_m(y'_i)\rightarrow\eta$ and $u(y_i)\rightarrow u(y_0),v_m(y'_i)\rightarrow v(y_0)$. By theorem 3.2 in \citep{crisli}, for all $i\geq1$, there exist symmetric matrices $M_i$ and $M_i'$ such that $(\frac{\partial\phi_i}{\partial y}(y_i,y'_i),M_i)=(d_i,M_i)\in\bar{J}^{2,+}u(y_i)$, $(-\frac{\partial\phi_i}{\partial y'}(y_i,y'_i),M'_i)=(d_i',M_i')\in\bar{J}^{2,-}v_m(y'_i)$ and
$$\begin{pmatrix}
M_i & 0\\
0 & M'_i
\end{pmatrix}\leq D^2\phi_i(y_i,y'_i)+\frac{1}{i}(D^2\phi_i(y_i,y'_i))^2.$$
Since $u$ is a subsolution of \eqref{hjbinf} and $v_m$ is a supersolution, we have
$$\min\{\beta u(y_i)-\mu(p_i)d_{i,2}-\frac{1}{2}\sigma(p_i)^2M_{i,22},u(y_i)-\mathcal{M}u(y_i)\}\leq0,$$
and
$$\min\{\beta v_m(y'_i)-\mu(p'_i)d'_{i,2}-\frac{1}{2}\sigma(p'_i)^2M'_{i,22},v_m(y'_i)-\mathcal{M}v_m(y'_i)\}\geq0.$$
Now, if we show that for infinitely many $i$'s we  have that
\begin{equation}\label{claiminf}
\beta u(y_i)-\mu(p_i)d_{i,2}-\frac{1}{2}\sigma(p_i)^2M_{i,22}\leq0,
\end{equation}
and since it is always true that
$$\beta v_m(y'_i)-\mu(p'_i)d'_{i,2}-\frac{1}{2}\sigma(p'_i)^2M'_{i,22}\geq0,$$
we have that $u\leq v_m$ by following the classical comparison proof in \citep{crisli}. Suppose then, that there exists $i_0$ such that \eqref{claiminf} is not true for all $i\geq i_0$, then for $i\geq i_0$
$$u(y_i)-\mathcal{M}u(y_i)\leq0.$$
Since $v_m$ is a supersolution, we must have that
$$v_m(y'_i)-\mathcal{M}v_m(y'_i)\geq0.$$
Since $u$ is usc, there exist $\zeta_i$ such that $\mathcal{M}u(y_i)=u(x_i-\zeta_i,\alpha(\zeta_i,p_i))+\zeta_i\alpha(\zeta_i,p_i)-k$. Then
$$u(y_i)\leq u(x_i-\zeta_i,\alpha(\zeta_i,p_i))+\zeta_i\alpha(\zeta_i,p_i)-k.$$
Extracting a subsequence if necessary, we assume that $\zeta_i\rightarrow \zeta_0$ as $i\rightarrow\infty$. First, consider $\zeta_0=0$, then by taking $\limsup$ in the inequality above we get $u(y_0)\leq u(y_0) -k$. This is a contradiction since $k>0$. Now assume that $\zeta_0\neq0$. From the above inequalities we have that
$$u(y_i)-v_m(y'_i)\leq u(x_i-\zeta_i,\alpha(\zeta_i,p_i))+\zeta_i\alpha(\zeta_i,p_i)-v_m(x'_i-\zeta'_i,\alpha(\zeta'_i,p'_i))-\zeta'_i\alpha(\zeta'_i,p'_i),$$
for any $0\leq\zeta'_i\leq p'_i$. Since $p'_i\rightarrow p_0$, let $\zeta'_i\rightarrow\zeta_0$ and taking $\limsup$ in the above inequality we get that 
$$\eta\leq (u-v_m)(x_0-\zeta_0,\alpha(\zeta_0,p_0)).$$
This is a contradiction since $y_0$ was chosen with minimum norm among maximizers of $u-v_m$ and $\zeta_0>0$. Therefore \eqref{claiminf} must hold for infinitely many $i$'s and \eqref{compinf} holds. As usual continuity in $\mathcal{O}$ and uniqueness of $V$ follow from the fact that $V$ is a viscosity solution of \eqref{hjbinf}.\\

We have just proved the following theorem:
\begin{theorem}\label{uniq}
Assume condition $\eqref{psi}$ and that the transaction cost $k>0$. If $W$ is a viscosity solution of equation \eqref{hjbinf} that is bounded by $U$ and satisfies the same boundary conditions as $V$, then $W=V$. Furthermore, $V$ is continuous in $\mathcal{O}$.
\end{theorem}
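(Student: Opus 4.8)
The comparison principle just established --- that any usc subsolution $u$ and lsc supersolution $v$ of \eqref{hjbinf}, both bounded by $U$ and satisfying the boundary ordering \eqref{boundcompinf}, obey $u\leq v$ in $\mathcal{O}$ --- does all the heavy lifting, so what remains is to extract uniqueness and continuity from it by the standard semicontinuous-envelope argument. The plan is as follows. By Theorem \ref{viscosity}, $V$ is a viscosity solution of \eqref{hjbinf}, hence $V^*$ is a usc subsolution and $V_*$ is a lsc supersolution; likewise, for the given viscosity solution $W$, the envelopes $W^*$ and $W_*$ are a usc subsolution and a lsc supersolution, respectively. Both $V$ and $W$ are bounded by $U$ (for $V$ this is \eqref{optstop}, for $W$ it is a hypothesis), and since taking envelopes preserves these one-sided bounds, the growth hypothesis used in the comparison argument via \eqref{growauxinf} is met for every pair I will consider.

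First I would check that the boundary ordering \eqref{boundcompinf} holds for the pairs to which I want to apply comparison. Using the hypothesis that $W$ satisfies the same boundary conditions as $V$, together with the boundary analysis of Section \ref{bousec} --- namely $V=0$ on $\{x=0\}$, and $V\to 0$ on $\partial\mathcal{O}$ in the absorbing case \eqref{boundaryinfab} --- I would verify that
$$\limsup_{y'\rightarrow y}W^*(y')\leq\liminf_{y'\rightarrow y}V_*(y')\quad\text{and}\quad\limsup_{y'\rightarrow y}V^*(y')\leq\liminf_{y'\rightarrow y}W_*(y')$$
for every $y\in\partial\mathcal{O}$, so that the comparison principle is applicable to the pairs $(W^*,V_*)$ and $(V^*,W_*)$.

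With the boundary ordering in hand I would invoke the comparison principle twice: once with $(u,v)=(W^*,V_*)$ to obtain $W^*\leq V_*$ in $\mathcal{O}$, and once with $(u,v)=(V^*,W_*)$ to obtain $V^*\leq W_*$ in $\mathcal{O}$. Combining these with the elementary envelope inequalities $V_*\leq V\leq V^*$ and $W_*\leq W\leq W^*$ produces the chain
$$V^*\leq W_*\leq W\leq W^*\leq V_*\leq V\leq V^*,$$
which forces every term to coincide. In particular $W=V$, giving uniqueness, and $V_*=V^*$, so $V$ is continuous in $\mathcal{O}$. (Continuity by itself also follows by applying the comparison principle directly to the pair $(V^*,V_*)$.)

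The main obstacle is the verification of the boundary ordering \eqref{boundcompinf}, and in particular the natural-boundary case of Section \ref{bousec}, where $V$ may be genuinely discontinuous on $\{p=0\}$: one must argue that "the same boundary conditions" is strong enough to control the one-sided limits of the envelopes of $W$ and $V$ at each point of $\partial\mathcal{O}$, so that comparison can legitimately be invoked. Once this boundary behavior is pinned down, the remaining steps --- identifying the envelopes as sub/supersolutions and closing the sandwich chain above --- are purely formal.
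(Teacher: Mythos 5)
Your proposal is correct and takes essentially the same route as the paper: the paper's proof of this theorem consists precisely of the comparison principle established in the preceding passage, with the final step compressed into the sentence ``as usual continuity in $\mathcal{O}$ and uniqueness of $V$ follow,'' which is exactly your envelope-sandwich deduction applied to the pairs $(W^*,V_*)$ and $(V^*,W_*)$ (and $(V^*,V_*)$ for continuity). Your caveat about verifying the boundary ordering \eqref{boundcompinf} for these pairs, especially in the natural-boundary case where $V$ may be discontinuous on $\{p=0\}$, is well taken --- the paper leaves that verification implicit as well.
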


\begin{remark}
Condition \eqref{psi} is satisfied by It\^{o} processes like Brownian Motion, Geometric Brownian Motion, Ornstein-Uhlenbeck and Cox-Ingersoll-Ross.
\end{remark}

\section{No transaction cost}
\setcounter{equation}{0}

From the proof of the above uniqueness result, we can see that the result depends on the fact that $k>0$. Let's start by pointing out that in this case the intervention operator becomes
\begin{equation}\label{subalways}
\mathcal{M}\varphi(y)=\sup_{0\leq\zeta\leq x}\varphi(\Gamma(y,\zeta))+\zeta\alpha(\zeta,p)\geq\varphi(\Gamma(y,0))=\varphi(y),
\end{equation}
for any measurable function $\varphi$. This implies in particular that any measurable function is a viscosity subsolution of \eqref{hjbinf}. On the other hand, $V\geq\mathcal{M}V$ for the value function. Then we have that
$$V\geq\mathcal{M}V\geq V.$$
Assume now that $V\in C^1(\mathcal{O})$. Since $\zeta=0$ is a maximum for $\zeta\mapsto V(\Gamma(y,\zeta))+\zeta\alpha(\zeta,p)$, then for all $y\in\mathcal{O}$:
\begin{align*}
0&\geq\left.\frac{\partial\alpha}{\partial\zeta}(\zeta,p)\frac{\partial V}{\partial p}(y)-\frac{\partial V}{\partial x}(y)+\alpha(\zeta,p)+\zeta\frac{\partial\alpha}{\partial\zeta}(\zeta,p)\right|_{\zeta=0}\\
&=\frac{\partial\alpha}{\partial\zeta}(0,p)\frac{\partial V}{\partial p}(y)-\frac{\partial V}{\partial x}(y)+p.
\end{align*}
Recall that $\alpha$ is non-increasing in $\zeta$, so we define
\begin{equation}\label{gamma}
\gamma(p)=-\frac{\partial\alpha}{\partial\zeta}(0,p),
\end{equation}
for all $p\geq0$. Hence, we get the following condition for $V$:
\begin{equation}\label{singcond}
-\gamma(p)\frac{\partial V}{\partial p}(y)-\frac{\partial V}{\partial x}(y)+p\leq0.
\end{equation}
This suggests that if we assume no fixed transaction cost we should look at a different HJB equation, that is
\begin{equation}\label{hjbinf0}
\min\left\{\beta\varphi -A\varphi,\gamma(p)\frac{\partial\varphi}{\partial p}+\frac{\partial\varphi}{\partial x}-p\right\}=0.
\end{equation}
On the other hand, condition \eqref{suma} implies that it is always better to split the orders into smaller orders. Indeed, given $(\zeta,p)\in\mathcal{O}$ and $0\leq \zeta'\leq \zeta$
$$\zeta'\alpha(\zeta',p)+(\zeta-\zeta')\alpha(\zeta,p)=\zeta\alpha(\zeta,p)+(\zeta-\zeta')(\alpha(\zeta',p)-\alpha(\zeta,p))\geq \zeta\alpha(\zeta,p),$$
since $\alpha$ is non-increasing in $\zeta$.

\subsection{Singular control}

In fact, the equation \eqref{hjbinf0} is the associated equation of the following control problem (\citep{oeksendal05acjd}): In this case our admissible controls are of the singular type, that is
$$dX_t=-d\xi_t,$$
where $\xi_{0}=0$, $\xi$ is an adapted, continuous non-decreasing and non-negative process. The price process in this case follows the dynamics
$$dP_t=\mu(P_{t-})dt+\sigma(P_{t-})dB_t-\gamma(P_{t-})d\xi_t,$$
where $\gamma$ (see \eqref{gamma}) is a non-negative smooth function that accounts for the price impact. In order to guarantee the existence and uniqueness of the process $P_t$, we need to also assume that $\gamma$ is a Lipschitz function (\citep{ProtterBook}). Now, the form of the value function $V_0$ changes to
\begin{equation}\label{valuefunctioninf0}
V_0(y)=\sup_{\xi}\mathbb{E}\left[\int_{0}^{\infty}e^{-\beta t}P_td\xi_t\right],
\end{equation}
for all $y\in\bar{\mathcal{O}}$. In this case the appropriate form of the DPP is
\begin{equation}\label{dppinf0}
V_0(y)=\sup_{\xi}\mathbb{E}\left[\int_{0}^{\tau}e^{-\beta s}P_sd\xi_s+e^{-\beta\tau}V_0(Y_{\tau})\right],
\end{equation}
for any stopping time $\tau$. As before, we can define the continuation region as 
$$\mathcal{C}=\{y\in\mathcal{O}:\gamma(p)\frac{\partial\varphi}{\partial p}+\frac{\partial\varphi}{\partial x}-p>0\}$$
and the trade region as
$$\mathcal{T}=\{y\in\mathcal{O}:\gamma(p)\frac{\partial\varphi}{\partial p}+\frac{\partial\varphi}{\partial x}-p=0\}.$$
Typically, singular controls are allowed to be c\`{a}dl\`{a}g instead of continuous. We decide to restrict our controls for two reasons: (1) Under the absence of fixed transaction cost, the investor will divide the orders into very small pieces as shown above. (2) When the singular control is discontinuous the stochastic integral may not be properly defined  (see \citep{ProtterBook}).

\subsection{Viscosity solution}

Although we only consider continuous strategies, the value function is still a viscosity solution of equation \eqref{hjbinf0} (which definition is similar to \ref{defviscosity}).

\begin{theorem}\label{viscosity0}
The value function $V_0$ defined by \eqref{valuefunctioninf0} is a viscosity solution of \eqref{hjbinf0} in $\mathcal{O}$.
\end{theorem}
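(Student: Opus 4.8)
The plan is to follow the same two-part template used for Theorem~\ref{viscosity} in the impulse setting, replacing the intervention operator $\mathcal{M}$ by the first-order gradient term of \eqref{hjbinf0} and replacing the Dynkin expansion by an It\^{o} expansion that keeps track of the singular part of the dynamics. Local boundedness of $V_0$ follows exactly as before from the growth bounds of Section~\ref{grosec}. The computation I would record once and reuse is the following: for $\varphi\in C^2(\mathcal{O})$ and the controlled process $Y_s=(X_s,P_s)$ under any admissible $\xi$, It\^{o}'s formula (no jump terms, since $\xi$ is continuous) gives
$$\mathbb{E}[e^{-\beta\tau}\varphi(Y_\tau)]=\varphi(y)+\mathbb{E}\left[\int_0^\tau e^{-\beta s}(A\varphi-\beta\varphi)(Y_s)\,ds\right]-\mathbb{E}\left[\int_0^\tau e^{-\beta s}\left(\tfrac{\partial\varphi}{\partial x}+\gamma\tfrac{\partial\varphi}{\partial p}\right)(Y_s)\,d\xi_s\right],$$
so that the two operators appearing in \eqref{hjbinf0} emerge naturally: the $ds$-term carries $\beta\varphi-A\varphi$, and once the revenue $\int e^{-\beta s}P_s\,d\xi_s$ from the DPP \eqref{dppinf0} is added, the $d\xi_s$-term carries $P_s-(\gamma\tfrac{\partial\varphi}{\partial p}+\tfrac{\partial\varphi}{\partial x})$.

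\emph{Supersolution.} Let $y_0$ minimize $V_{0*}-\varphi$ with equality. Two inequalities must be shown. For $\beta\varphi-A\varphi\geq0$ I would copy verbatim the no-intervention argument from Theorem~\ref{viscosity}: take $\xi\equiv0$, let $T_n$ be the exit time from a $\delta$-ball, apply Dynkin's formula and the DPP \eqref{dppinf0}, and reach a contradiction if $\beta\varphi-A\varphi\leq-\theta$ near $y_0$. For the gradient constraint $\gamma(p_0)\tfrac{\partial\varphi}{\partial p}(y_0)+\tfrac{\partial\varphi}{\partial x}(y_0)-p_0\geq0$ I would instead test $V_0$ against the feasible controls that push $\xi$ at a constant rate over $[0,\epsilon]$ with total mass $\ell$; inserting this into the DPP and the It\^{o} identity, cancelling $\varphi(y_n)$, dividing by $\ell$ and letting first $\epsilon\to0$ (so the $ds$-term and the Brownian contribution vanish and the state follows the deterministic curve $dP=-\gamma(P)\,d\xi$) and then $\ell\to0$, yields exactly $p_0-\tfrac{\partial\varphi}{\partial x}(y_0)-\gamma(p_0)\tfrac{\partial\varphi}{\partial p}(y_0)\leq0$.

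\emph{Subsolution.} Let $y_0$ maximize $V_0^*-\varphi$ with equality. Mirroring the impulse proof, I would suppose $\beta\varphi-A\varphi>0$ in a $\delta$-ball (otherwise the first term of the minimum is nonpositive and we are done) and show the gradient term is nonpositive there. Take $y_n\to y_0$ with $V_0(y_n)\to V_0^*(y_0)$, $\tfrac1n$-optimal controls $\xi_n$, and the exit time $T_n$ from the ball. Feeding $V_0\leq\varphi$ near $y_0$, the DPP, and the It\^{o} identity into a single inequality gives, with $A\varphi-\beta\varphi\leq-\theta$ and assuming for contradiction $P_s-\tfrac{\partial\varphi}{\partial x}-\gamma\tfrac{\partial\varphi}{\partial p}\leq-\theta'$ on the ball,
$$\theta\,\mathbb{E}\left[\int_0^{T_n}e^{-\beta s}\,ds\right]+\theta'\,\mathbb{E}\left[\int_0^{T_n}e^{-\beta s}\,d\xi_s\right]\leq V_0(y_n)-\varphi(y_n)+\tfrac1n\longrightarrow0.$$

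\emph{Main obstacle.} The crux is ruling out that the state can leave the $\delta$-ball ``for free'' in the combined time-plus-control budget, i.e. extracting a contradiction from the last display. I would argue that both expectations tending to zero forces $T_n\to0$ in probability, since $\mathbb{E}[\int_0^{T_n}e^{-\beta s}ds]=\beta^{-1}(1-\mathbb{E}[e^{-\beta T_n}])$, while $\mathbb{E}[\xi_{T_n}\mathbf{1}_{\{T_n\leq K\}}]\to0$ for each fixed $K$. Then, because the continuous process $Y$ must have moved a distance $\geq\tfrac\delta2$ by time $T_n$, I would bound $|Y_{T_n}-y_n|$ by the drift contribution ($O(T_n)$), the stopped Brownian integral (controlled in $L^2$ by $\mathbb{E}[T_n\wedge K]$ via the It\^{o} isometry), and the control mass $(1+\sup\gamma)\,\xi_{T_n}$, so that $\mathbb{E}[|Y_{T_n}-y_n|\mathbf{1}_{\{T_n\leq K\}}]\to0$, whereas this quantity is $\geq\tfrac\delta2\,\mathbb{P}(T_n\leq K)$. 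Combined with $\mathbb{P}(T_n>K)\to0$ this forces $1\to0$, the desired contradiction. This displacement estimate, which simultaneously exploits the vanishing of both budgets and tames the Brownian term, is the only genuinely new ingredient relative to the impulse-control proof; the remainder is bookkeeping with It\^{o}'s formula and the DPP, after which continuity and uniqueness would be inherited as in Theorem~\ref{uniq}.
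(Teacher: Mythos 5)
Your proposal is correct and follows essentially the same route as the paper's proof: local boundedness from $V_0\leq U$, the supersolution property obtained both from the no-intervention control $\xi\equiv0$ with Dynkin's formula and from a constant-rate control over a short interval, and the subsolution property obtained by negating both terms of the minimum simultaneously and feeding $\tfrac1n$-optimal controls and the exit time into the DPP \eqref{dppinf0} together with the semimartingale Dynkin formula, arriving at the same combined time-plus-control inequality as the paper. The differences are ones of rigor in your favor: where the paper simply asserts that $\mathbb{E}\bigl[\int_0^{\tau_n}e^{-\beta s}(ds+d\xi^n_s)\bigr]$ is bounded away from zero ``by the a.s.\ continuity'' of $Y^n$, your displacement estimate ($T_n\to0$ in probability, It\^{o}-isometry control of the stopped Brownian integral, and the control-mass bound) actually proves that claim; likewise your small-mass variant (total mass $\ell$, divide by $\ell$, then $\ell\to0$) of the gradient-constraint argument is a cleaner version of the paper's unit-mass $d\xi_t=\tfrac1h 1_{[0,h]}(t)dt$ passage, and the only blemish is the harmless sign slip $V_0(y_n)-\varphi(y_n)$ where the derivation yields $\varphi(y_n)-V_0(y_n)$, both of which vanish in the limit.
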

\begin{proof} 
Since we can approach finite variation functions by simple functions, by proposition \ref{growprop} we have that 
\begin{equation}\label{singgrow}
V_0\leq U.
\end{equation}
Therefore, $V_0$ is locally bounded.

Subsolution property: Let $y_0\in\mathcal{O}$ and $\varphi\in C^{2}(\mathcal{O})$ such that $y_0$ is a maximizer of $V_0^*-\varphi$ on $\mathcal{O}$ with $V_0^*(y_0)=\varphi(y_0)$. Now suppose that there exists $\kappa>0$ and $\delta>0$ such that
\begin{equation}\label{contr2inf0}
-\beta\varphi(y)+A\varphi(y)\leq-\kappa\textrm{ and }p-\gamma(p)\frac{\partial\varphi}{\partial p}(y)-\frac{\partial\varphi}{\partial x}(y)\leq-\kappa
\end{equation}
for all $y\in\mathcal{O}$ such that $|y-y_0|<\delta$. Let $(y_n)$ be a sequence in $\mathcal{O}$ such that $y_n\rightarrow y_0$ and
$$\lim_{n\rightarrow\infty}V_0(y_n)=V_0^*(y_0).$$
Given any stopping time $\tau$, by \eqref{dppinf0}, for all $n\geq1$ there exists an admissible control $\xi^n$ such that
$$V_0(y_n)\leq\mathbb{E}\left[\int_0^{\tau}e^{-\beta s}P^n_{s}d\xi^n_s+e^{-\beta\tau}V_0(Y^n_{\tau})\right]+\frac{1}{n},$$
where $Y_s^n$ is the process controlled by $\xi^n$ for $s\geq 0$ starting at $y_n$. Since $V_0\leq V_0^*\leq\varphi$, using Dynkin's formula for semimartingales (\citep{ProtterBook}) we have that
\begin{align*}
V_0(y_n)&\leq\mathbb{E}\left[\int_{0}^{\tau}e^{-\beta s}P^n_{s}d\xi^n_s\right]+\varphi(y_n)+\mathbb{E}\left[\int_{0}^{\tau}e^{-\beta s}\left (-\beta \varphi(Y_s^n)+A\varphi(Y_s^n)\right)ds\right]\\
&-\mathbb{E}\left[\int_{0}^{\tau}e^{-\beta s}\left (\gamma(P_s^n)\frac{\partial\varphi}{\partial p}(Y^n_s)+\frac{\partial\varphi}{\partial x}(Y^n_s)\right)d\xi^n_s\right]+\frac{1}{n}.
\end{align*}
Consider again the stopping time
$$\tau_n=\inf\{s\geq 0:|Y_s^n-y_0|\geq\delta\},$$
then by \eqref{contr2inf0}
\begin{align*}
V_0(y_n)&\leq-\kappa\mathbb{E}\left[\int_0^{\tau_n}e^{-\beta s} (ds+d\xi^n_s)\right]+\varphi(y_n)+\frac{1}{n}.
\end{align*}
Taking $n\rightarrow\infty$ we obtain a contradiction since the integral inside the expectation is bounded away from 0 for any admissible control $\xi$  by the a.s continuity of the process $Y^n_s$. Hence at least one of the inequalities in \eqref{contr2inf0} is not possible and this establishes the subsolution property.

Supersolution property: Let $y_0\in\mathcal{O}$ and $\varphi\in C^{2}(\mathcal{O})$ such that $y_0$ is a minimizer of $V_{0*}-\varphi$ on $\mathcal{O}$ with $V_{0*}(y_0)=\varphi(y_0)$. Let $(y_n)$ be a sequence in $\mathcal{O}$ such that $y_n\rightarrow y_0$ and
$$\lim_{n\rightarrow\infty}V_0(y_n)=V_{0*}(y_0).$$
First, suppose that there exists $\theta>0$ and $\delta>0$ such that
\begin{equation}\label{contr1inf0}
\beta\varphi(y)-A\varphi(y)\leq-\theta
\end{equation}
for all $y\in\mathcal{O}$ such that $|y-y_0|<\delta$. Fix $n$ large enough such that $|y_n-y_0|<\delta$ and consider the process $Y_s^n$ for $s\geq 0$ with no intervation, i.e. $\xi=0$, such that $Y^n_{0}=y_n$. Let 
$$\tau_n=\inf\{s\geq 0:|Y_s^n-y_0|\geq\delta\}.$$
Now, by Dynkin's formula for semimartingales  and \eqref{contr1inf0} we have 
\begin{align*}
\mathbb{E}[e^{-\beta\tau_n}\varphi(Y^n_{\tau_n})]&=\varphi(y_n)+\mathbb{E}\left[\int_{0}^{\tau_n}e^{-\beta s}\left (-\beta \varphi(Y_s^n)+A\varphi(Y_s^n)\right)ds\right]\\
&-\mathbb{E}\left[\int_{0}^{\tau_n}e^{-\beta s}\left (\gamma(P_s^n)\frac{\partial\varphi}{\partial p}(Y^n_s)+\frac{\partial\varphi}{\partial x}(Y^n_s)\right)d\xi_s\right]\\
&=\varphi(y_n)+\mathbb{E}\left[\int_{0}^{\tau_n}e^{-\beta s}\left (-\beta \varphi(Y_s^n)+A\varphi(Y_s^n)\right)ds\right]\\
&\geq\varphi(y_n)-\theta\mathbb{E}\left[\int_{0}^{\tau_n}e^{-\beta s}ds\right].
\end{align*}
As before, from here we can draw a contradiction with $V_{0*}(y_0)=\varphi(y_0)$ by the a.s. continuity if the process $Y^n_s$. 
Now, take $h>0$ and consider the process $Y_t$ with control process $d\xi_t=\frac{1}{h} 1_{[0,h]}(t)dt$ and $Y_0=y$ for given $y\in\mathcal{O}$. Using \eqref{dppinf0} we can show that
\begin{align*}
V_0(y)&\geq\mathbb{E}\left[\int_{0}^{h}e^{-\beta s}P_sd\xi_s+e^{-\beta h}V(Y_h)\right]\\
&\geq\mathbb{E}\left[\int_{0}^{h}e^{-\beta s}P_sd\xi_s+e^{-\beta h}\varphi(Y_h)\right]\\
&=\mathbb{E}\left[\frac{1}{h}\int_{0}^{h}e^{-\beta s}P_sds+e^{-\beta h}\varphi(Y_h)\right].
\end{align*}
By Dynkin's formula again,
\begin{align*}
\mathbb{E}[e^{-\beta h}\varphi(Y_h)]&=\varphi(y)+\mathbb{E}\left[\int_{0}^he^{-\beta s}\left (-\beta \varphi(Y_s)+A\varphi(Y_s)\right)ds\right]\\
&-\mathbb{E}\left[\int_{0}^he^{-\beta s}\left (\gamma(P_s)\frac{\partial\varphi}{\partial p}(Y_s)+\frac{\partial\varphi}{\partial x}(Y_s)\right)d\xi_s\right]\\
&=\varphi(y)+\mathbb{E}\left[\int_{0}^he^{-\beta s}\left (-\beta \varphi(Y_s)+A\varphi(Y_s)\right)ds\right]\\
&-\frac{1}{h}\mathbb{E}\left[\int_{0}^he^{-\beta s}\left (\gamma(P_s)\frac{\partial\varphi}{\partial p}(Y_s)+\frac{\partial\varphi}{\partial x}(Y_s)\right)ds\right].
\end{align*}
Letting $h\rightarrow0$, we have
$$V_0(y)\geq\varphi(y)+p-\gamma(p)\frac{\partial\varphi}{\partial p}(y)-\frac{\partial\varphi}{\partial x}(y).$$
Therefore, for all $n\geq1$ we have 
$$V_0(y_n)\geq\varphi(y_n)+p_n-\gamma(p_n)\frac{\partial\varphi}{\partial p}(y_n)-\frac{\partial\varphi}{\partial x}(y_n).$$
Since $\gamma$ is continuous, letting $n\rightarrow\infty$ we get
$$\varphi(y_0)=V_{0*}(y_0)\geq\varphi(y_0)+p_0-\gamma(p_0)\frac{\partial\varphi}{\partial p}(y_0)-\frac{\partial\varphi}{\partial x}(y_0)$$
as desired. This establishes the supersolution property.
\end{proof}

\subsection{Uniqueness}

Recall that with the impulse formulation we do not have uniqueness in absence of transaction cost. This is not the case with the singular control formulation.

\begin{theorem}\label{uniq0}
Assume that  \eqref{psi} is satisfied. If $W$ is a viscosity solution of equation \eqref{hjbinf0} that is bounded by $U$ and satisfies the same boundary conditions as $V_0$, then $W=V_0$. Furthermore, $V_0$ is continuous in $\mathcal{O}$.
\end{theorem}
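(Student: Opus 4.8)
The plan is to establish a comparison principle for \eqref{hjbinf0} by adapting the argument behind Theorem \ref{uniq}, with the first-order gradient constraint and a perturbation-induced strict gap playing the role that the nonlocal operator $\mathcal{M}$ and the strict drop $-k$ played in the impulse case. Let $u$ be an usc subsolution and $v$ a lsc supersolution of \eqref{hjbinf0}, both bounded by $U$ and ordered on $\partial\mathcal{O}$ as in \eqref{boundcompinf}, and set $v_m(x,p)=v(x,p)+\frac1m x^2\psi(p)$. Because $\psi$ solves $Au=\beta u$, the elliptic branch is preserved, $\beta v_m-Av_m\ge0$; and a direct computation on the gradient branch gives
\[
\gamma(p)\partial_p v_m+\partial_x v_m-p\;\ge\;\big(\gamma(p)\partial_p v+\partial_x v-p\big)+\tfrac1m\big(2x\psi(p)+\gamma(p)x^2\psi'(p)\big),
\]
so $v_m$ is a supersolution whose gradient constraint is strict, with a boost bounded below near the contact point by $c_m:=\frac1m(2x_0\psi(p_0)+\gamma(p_0)x_0^2\psi'(p_0))>0$ (positivity uses $x_0>0$, which holds since the contact point lies in the interior $\mathcal{O}$, together with $\psi,\psi'>0$). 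As before, \eqref{boundinf} and \eqref{psi} yield $\lim_{|y|\to\infty}(u-v_m)(y)=-\infty$.

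Suppose, for contradiction, that $\eta:=\sup_{\bar{\mathcal{O}}}(u-v_m)>0$ for some $m$. I would take the maximizer $y_0=(x_0,p_0)\in\mathcal{O}$ of minimum norm, double variables with $\phi_i(y,y')=\frac i2|y-y'|^4+|y-y_0|^4$, and apply Theorem 3.2 of \citep{crisli} to get $(d_i,M_i)\in\bar J^{2,+}u(y_i)$ and $(d_i',M_i')\in\bar J^{2,-}v_m(y_i')$ with the usual matrix inequality, where $y_i,y_i'\to y_0$, $i|y_i-y_i'|^4\to0$, and $d_i-d_i'=4|y_i-y_0|^2(y_i-y_0)\to0$. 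The subsolution inequality makes at least one branch of the minimum nonpositive at each $y_i$. If the elliptic branch is nonpositive for infinitely many $i$, then combining with the elliptic branch of $v_m$ (which is $\ge0$) and running the standard estimate of \citep{crisli} --- via Lipschitz continuity of $\mu,\sigma$ and the matrix inequality --- gives $\beta\eta\le0$, a contradiction.

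The crux is to exclude the complementary case: the gradient branch of $u$ nonpositive for all large $i$, i.e. $\gamma(p_i)(d_i)_p+(d_i)_x-p_i\le0$, against the strict gradient branch of $v_m$, $\gamma(p_i')(d_i')_p+(d_i')_x-p_i'\ge c_m$. Subtracting, the difference of the two first-order expressions is $\le-c_m$; I would then verify it tends to $0$. The terms $(d_i)_x-(d_i')_x$ and $p_i-p_i'$ vanish, and I split the impact term as $\gamma(p_i)[(d_i)_p-(d_i')_p]+[\gamma(p_i)-\gamma(p_i')](d_i')_p$: the first summand vanishes since $\gamma$ is bounded near $p_0$ and $(d_i)_p-(d_i')_p\to0$, while the second is controlled using the Lipschitz bound $|\gamma(p_i)-\gamma(p_i')|\le L|p_i-p_i'|$ and $(d_i')_p=2i|y_i-y_i'|^2(p_i-p_i')$, giving a bound $\le 2Li|y_i-y_i'|^4\to0$. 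Passing to the limit yields $0\le-c_m<0$, so this case is impossible. Hence $u\le v_m$ for every $m$, and $m\to\infty$ gives $u\le v$ in $\mathcal{O}$; applying this to the envelopes of $W$ and of $V_0$ (each a solution is squeezed between its usc and lsc envelopes, which are a sub- and a supersolution sharing the boundary data) then forces $W=V_0$ and the continuity of $V_0$, exactly as in the impulse case. I expect this gradient-branch case to be the main obstacle, since it is where the Lipschitz hypothesis on $\gamma$ and the quartic penalization must conspire to send $\gamma(p_i)(d_i)_p-\gamma(p_i')(d_i')_p$ to zero, and where the perturbation $v_m$ must be tuned to supply the strict gap $c_m$ --- the structural replacement for the $-k$ drop available when $k>0$.
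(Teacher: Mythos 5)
Your proposal is correct and follows essentially the same route as the paper: comparison via doubling of variables with the quartic penalization, a strict supersolution perturbation built from $\psi$, a dichotomy over which branch of the minimum is active, and the gradient branch eliminated exactly by your splitting $\gamma(p_i)\left[(d_i)_p-(d'_i)_p\right]+\left[\gamma(p_i)-\gamma(p'_i)\right](d'_i)_p$ with the Lipschitz bound $2Li|y_i-y'_i|^4\rightarrow0$. The only real difference is the perturbation: the paper uses the convex combination $v_m=\left(1-\frac{1}{m}\right)v+\frac{1}{m}\left(C(x+1)^2\psi(p)+1\right)$, which is strict in both branches (gap $\delta_i=\min\{p'_i/m,\beta/m\}$), while your additive $\frac{1}{m}x^2\psi(p)$ is strict only in the gradient branch --- which suffices, since with $\beta>0$ the elliptic branch of the classical comparison estimate in \citep{crisli} already yields $\beta\eta\leq0$ without strictness; if anything, your explicit splitting justifies the limit passage more carefully than the paper's terse assertion that $d_i,d'_i\rightarrow0$ (only $d_i-d'_i\rightarrow0$ and $i|y_i-y'_i|^4\rightarrow0$ are guaranteed, which is precisely what your decomposition uses).
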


\begin{proof}
The proof follows the same strategy as in the impulse control case. Let $u$ be an upper semi-continuous (usc) viscosity subsolution of the HJB equation \eqref{hjbinf0} and $v$ be a lower semi-continuous (lsc) viscosity supersolution of the same equation in $\mathcal{O}$, such that they are bounded by $U$ and condition \eqref{boundcompinf} holds. Define 
$$v_m(x,p)=\left(1-\frac{1}{m}\right)v(x,p)+\frac{1}{m}\left(C(x+1)^2\psi(p)+1\right)$$
for all $m\geq1$ and $C$ as in \eqref{boundinf}. Recall that $\gamma$ is non-negative and $\psi$ is an increasing function, then \eqref{boundinf} implies that
\begin{align*}
-p+\frac{\partial v_m}{\partial x}+\gamma(p)\frac{\partial v_m}{\partial p}&\geq
-p+\left(1-\frac{1}{m}\right)p+\frac{\partial}{\partial x}\frac{1}{m}C(x+1)^2\psi(p)+\gamma(p)\frac{\partial}{\partial p}\frac{1}{m}C(x+1)^2\psi(p)\\
&=-\frac{1}{m}p+\frac{1}{m}2C(x+1)\psi(p)+\gamma(p)\frac{1}{m}C(x+1)^2\psi'(p)\\
&\geq-\frac{1}{m}p+\frac{2}{m}p(x+1)+\gamma(p)\frac{1}{m}C(x+1)^2\psi'(p)\\
&\geq \frac{1}{m}p.
\end{align*}
Also $(\beta I-A)\left(\frac{1}{m}\right)=\frac{\beta}{m}>0$, where $I$ is the identity operator. Therefore $v_m$ is a strict supersolution of \eqref{hjbinf0} in $\mathcal{O}$. Following the same lines and definitions as in the previous proof we have
$$\min\{\beta u(y_i)-\mu(p_i)d_{i,2}-\frac{1}{2}\sigma(p_i)^2M_{i,22},-p_i+d_{i,1}+\gamma(p_i)d_{i,2}\}\leq0,$$
and
$$\min\{\beta v_m(y'_i)-\mu(p'_i)d'_{i,2}-\frac{1}{2}\sigma(p'_i)^2M'_{i,22},-p'_i+d'_{i,1}+\gamma(p'_i)d'_{i,2}\}\geq\delta_i,$$
where $\delta_i=\min\left\{\frac{p'_i}{m},\frac{\beta}{m}\right\}$. Since $p'_i\rightarrow p_0$ and $y_0\in\mathcal{O}$, $\delta_i>0$ for large enough $i$. 
We need to show now that for infinitely many $i$'s we have that
\begin{equation}\label{claiminf0}
\beta u(y_i)-\mu(p_i)d_{i,2}-\frac{1}{2}\sigma(p_i)^2M_{i,22}\leq0.
\end{equation}
Suppose then, that there exists $i_0$ such that \eqref{claiminf0} is not true for all $i\geq i_0$, then for $i\geq i_0$
$$-p_i+d_{i,1}+\gamma(p_i)d_{i,2}\leq0.$$
Since $v_m$ is a supersolution, we must have that
$$-p'_i+d'_{i,1}+\gamma(p'_i)d'_{i,2}\geq\delta_i.$$
Hence,
$$p_i-p_i'-(d_{i,1}-d'_{i,1})-(\gamma(p_i)d_{i,2}-\gamma(p'_i)d'_{i,2})\geq\delta_i.$$
Since $d_i,d'_i$ goes to 0 as $i$ goes to $\infty$, we get the contradiction $0\geq\delta_0=\min\left\{\frac{p_0}{m},\frac{\beta}{m}\right\}>0$. Therefore \eqref{claiminf0} must hold for infinitely many $i$'s and the comparison result holds. Everything follows now as before.
\end{proof}

\subsection{Optimal strategy for a special case}
Previous sections characterize the value function of our problem in different formulations. We will calculate now the explicit solution of the value function and describe the optimal strategy in a particular case. Let us come back to the impulse control case. Since we are allowed to do multiple trades at the same time, we are going to explore this strategy. Assumption \eqref{suma} guarantees that the price impact does not change by splitting the trades, but the profit obtained by doing so could be greater.  Let's define the following function 
\begin{equation}\label{w}
W(y)=\int_0^x\alpha(s,p)ds\textrm{  for  }y\in\mathcal{O}.
\end{equation}
This is the best that we can do when we do many trades at the same time. It is clear that this is not attainable with any impulse control. Since $\alpha$ is non-increasing on $x$ and positive, we have for all $y\in\mathcal{O}$
\begin{equation}\label{inqlema1}
x\alpha(x,p)\leq\int_0^x\alpha(s,p)ds.
\end{equation}
Therefore, for all $0\leq\zeta\leq x$
\begin{align*}
W(\Gamma(y,\zeta))+\zeta \alpha(\zeta,p)&= \zeta \alpha(\zeta,p)+\int_0^{x-\zeta}\alpha(s,\alpha(\zeta,p))ds\\
&\leq \zeta \alpha(\zeta,p)+\int_0^{x}\alpha(s,p)ds-\int_0^{\zeta}\alpha(s,p)ds\\
&= W(y)+\zeta \alpha(\zeta,p)-\int_0^{\zeta}\alpha(s,p)ds\\
&\leq W(y),
\end{align*}
where the last inequality follows from \eqref{inqlema1}. Hence $\mathcal{M}W\leq W$ and therefore $\mathcal{M}W= W$ by \eqref{subalways}. On the other hand, the function $W$ satisfies \eqref{singcond} with equality. Indeed, by the condition \eqref{suma} we have that for any $\zeta_1$, $\zeta_2$ and $p$
$$\frac{\partial\alpha}{\partial\zeta}(\zeta_1+\zeta_2,p)=\frac{\partial\alpha}{\partial p}(\zeta_1,\alpha(\zeta_2,p))\frac{\partial\alpha}{\partial\zeta}(\zeta_2,p),$$
and taking $\zeta_2=0$ we obtain
$$\frac{\partial\alpha}{\partial\zeta}(\zeta_1,p)=\frac{\partial\alpha}{\partial p}(\zeta_1,p)\frac{\partial\alpha}{\partial\zeta}(0,p)=-\gamma(p)\frac{\partial\alpha}{\partial p}(\zeta_1,p).$$
Now, since $\alpha$ is smooth we find
\begin{align*}
-\gamma(p)\frac{\partial W}{\partial p}(y)-\frac{\partial W}{\partial x}(y)+p&=-\gamma(p)\int_0^x\frac{\partial\alpha}{\partial p}(s,p)ds-\frac{\partial }{\partial x}\int_0^x\alpha(s,p)ds+p\\
&=\int_0^x\frac{\partial\alpha}{\partial\zeta}(s,p)ds-\alpha(x,p)+p\\
&=\alpha(x,p)-\alpha(0,p)-\alpha(x,p)+p=0.
\end{align*}
If we had also that $\beta W-AW\geq0$, then $W$ would solve both equations \eqref{hjbinf} and \eqref{hjbinf0} and $\mathcal{T}=\mathcal{O}$. 

Now, \citep{ajay} considers impact functions of the form $\alpha(x,p)=pc(x)$, where $0\leq c\leq1$ is nonincreasing. In our case, by condition \eqref{suma}, $c$ must satisfy $c(x_1)c(x_2)=c(x_1+x_2)$ and therefore we end up with the following price impact functions and $W$:
\begin{align}
\alpha(x,p)&=pe^{-\lambda x}\label{alfa}\\
\gamma(p)&=\lambda p\\
W(x,p)&=\frac{p}{\lambda}(1-e^{-\lambda x})\label{ww}
\end{align}
with $\lambda>0$. This function was proposed also in \citep{He} and \citep{Ly}. Let's consider this price impact function for the moment. In this case we have the following:
\begin{theorem}\label{conecspecial}
$V_0=W=V$ if and only if $U(x,p)=xp$.
\end{theorem}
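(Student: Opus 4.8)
The plan is to reduce the whole statement to the single scalar inequality $\mu(p)\le\beta p$, which I will show is equivalent both to $U(x,p)=xp$ and to $\beta W-AW\ge0$ on $\mathcal{O}$; the two directions of the theorem then follow from the viscosity/uniqueness machinery already developed, together with one verification-type estimate for the impulse problem.

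First I would record this key equivalence. Since $k=0$ and $x,P_\tau\ge0$, we have $U(x,p)=x\sup_\tau\mathbb{E}[e^{-\beta\tau}P_\tau]$, so $U(x,p)=xp$ for all $(x,p)$ iff $\sup_\tau\mathbb{E}[e^{-\beta\tau}P_\tau\mid P_0=p]=p$, i.e. iff stopping immediately is optimal for the reward $p\mapsto p$. By It\^o's formula the drift of $e^{-\beta t}P_t$ is $e^{-\beta t}(\mu(P_t)-\beta P_t)$, so this holds iff $e^{-\beta t}P_t$ is a supermartingale, iff $\mu(p)\le\beta p$ for all $p$ (the ``only if'' direction coming from the fact that a strict violation $\mu(p_0)>\beta p_0$ would make $\mathbb{E}[e^{-\beta t}P_t]>p_0$ for small $t$, contradicting optimality of $\tau=0$). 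On the other hand, from $W(x,p)=\frac{p}{\lambda}(1-e^{-\lambda x})$ one has $W_{pp}=0$, hence
\[
\beta W-AW=\frac{1}{\lambda}(1-e^{-\lambda x})\bigl(\beta p-\mu(p)\bigr),
\]
whose sign on $\mathcal{O}$ is governed by $\beta p-\mu(p)$. Thus $U(x,p)=xp\iff\mu(p)\le\beta p\iff\beta W-AW\ge0$.

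For the implication $U=xp\Rightarrow V_0=W=V$: the equivalence gives $\beta W-AW\ge0$, which combined with the already established identities $\mathcal{M}W=W$ and $\gamma W_p+W_x-p=0$ shows that the smooth function $W$ is a classical, hence viscosity, solution of \eqref{hjbinf0}; since $W\le xp=U$ (using $1-e^{-\lambda x}\le\lambda x$) and $W=0$ on $\partial\mathcal{O}$, Theorem \ref{uniq0} yields $V_0=W$. For the impulse value, the lower bound $V\ge W$ holds unconditionally: splitting the whole position into $N$ equal trades at time $0$ produces revenue $\sum_{i=1}^N\frac{x}{N}\alpha(\tfrac{ix}{N},p)$, a right Riemann sum of the decreasing integrand $\alpha(\cdot,p)$ converging to $\int_0^x\alpha(s,p)\,ds=W(x,p)$ by \eqref{suma}. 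For the reverse bound I would run a verification argument: $W$ is a supersolution of \eqref{hjbinf} (since $\beta W-AW\ge0$ and $W-\mathcal{M}W=0$), so for any admissible $\nu$, Dynkin's formula between interventions (supermartingale decrease) together with $W(Y_{\tau_n^-})=\mathcal{M}W(Y_{\tau_n^-})\ge W(Y_{\tau_n})+\zeta_n\alpha(\zeta_n,P_{\tau_n^-})$ at each intervention telescopes to
\[
W(y)\ge\mathbb{E}\Bigl[\sum_{\tau_n\le T}e^{-\beta\tau_n}\zeta_nP_{\tau_n}\Bigr]+\mathbb{E}[e^{-\beta T}W(Y_T)];
\]
dropping the last (nonnegative) term and letting $T\to\infty$ by monotone convergence gives $W\ge V$, whence $V=W$.

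For the converse $V_0=W\Rightarrow U=xp$: since $V_0$ is a viscosity solution of \eqref{hjbinf0} by Theorem \ref{viscosity0} and $W$ is smooth, $V_0=W$ forces $W$ to be a classical solution, so $\min\{\beta W-AW,\ \gamma W_p+W_x-p\}=0$ pointwise; as the second argument vanishes identically, this gives $\beta W-AW\ge0$, i.e. $\mu(p)\le\beta p$, i.e. $U=xp$ by the equivalence. The main obstacle I anticipate is the impulse verification step: one must justify the supermartingale/telescoping estimate in the presence of possibly infinitely many interventions, verify integrability of the local-martingale parts so that Dynkin's formula applies across jumps, and handle the tail term $\mathbb{E}[e^{-\beta T}W(Y_T)]$ (here nonnegativity of $W$ makes simply dropping it legitimate for the desired inequality). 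A secondary delicate point is making rigorous, on the optimal-stopping side, that $U=xp$ genuinely forces $\mu\le\beta p$ rather than merely permitting it.
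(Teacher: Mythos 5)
Your proposal is correct, and its skeleton agrees with the paper's: both directions are funneled through the single inequality $\beta W-AW\geq0$, which for $W(x,p)=\frac{p}{\lambda}(1-e^{-\lambda x})$ reduces (as you compute, since $W_{pp}=0$) to $\mu(p)\leq\beta p$, and the identity $V_0=W$ is obtained exactly as in the paper, by noting that the smooth $W$ then solves \eqref{hjbinf0} classically (the gradient constraint holding with equality, $\mathcal{M}W=W$ being already established) and invoking Theorem \ref{uniq0} with the growth bound $W\leq xp=U$. You deviate in three places, all legitimately. First, where the paper proves $U(x,p)=xp$ by citing the uniqueness theorem for optimal stopping problems \citep{oksenreik} and, in the converse direction, simply asserts that $U=xp$ forces $\beta\varphi-A\varphi\geq0$ for $\varphi(p)=p$, you give a self-contained supermartingale characterization of $e^{-\beta t}P_t$; this is sound provided you localize (stop at the exit of a neighborhood of $p_0$) before applying optional sampling in the strict-violation argument, and pass from local to true supermartingale via nonnegativity and Fatou in the other direction. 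Second, for $V\leq W$ the paper runs an induction over the classes $\Upsilon_n$ of controls with at most $n$ trades, using the maximum of $\zeta\mapsto\zeta e^{-\lambda\zeta}$ and Lemma 7.1 of \citep{oeksendal05acjd} to pass to the limit, whereas you run a verification argument: $e^{-\beta t}W(Y_t)$ is a supermartingale between interventions (Dynkin applies on finite horizons since $W_p\leq1/\lambda$ is bounded and $\sigma$ has linear growth, and admissible impulse controls have $M<\infty$ interventions, so the telescoping is unproblematic), with jumps dominated via $\mathcal{M}W\leq W$. Your route is arguably more robust — it works verbatim for any impact satisfying \eqref{suma} once $\beta W-AW\geq0$ holds — and it incidentally sidesteps a slip in the paper's base case, where the displayed bound $U(\hat x,p)=\hat xp\leq W(x,p)$ fails as written for $x\geq1/\lambda$ (the correct chain is $\sup_{\nu\in\Upsilon_1}\mathbb{E}[e^{-\beta\tau_1}\zeta_1P_{\tau_1-}e^{-\lambda\zeta_1}]\leq\hat x e^{-\lambda\hat x}p\leq W(x,p)$, which does hold since $(1+u)e^{-u}\leq1$). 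Third, you make explicit, via Riemann sums of simultaneous trades, the lower bound $V\geq W$ that the paper dismisses with ``clearly the other inequality holds''; under $k=0$ and \eqref{suma} this is exactly right. I see no genuine gap: the integrability and tail issues you flag are precisely the ones that arise, and they are handled as you anticipate.
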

\begin{proof}
If $V_0=W$ then  $\beta W-AW\geq0$ and therefore $\beta\varphi-A\varphi\geq0$ for $\varphi(p)=p$. By the uniqueness result for optimal stopping problems (see Theorem 3.1 in \citep{oksenreik})
$$p=\sup\limits_{\tau}\mathbb{E}[e^{-\beta\tau}P_{\tau}],$$
that is $U(x,p)=xp$.
Suppose that 
$$U(x,p)=x\sup\limits_{\tau}\mathbb{E}[e^{-\beta\tau}P_{\tau}]=xp,$$ 
for $y\in\mathcal{O}$. This means that $\beta\varphi-A\varphi\geq0$ for $\phi(p)=p$. Therefore $\beta W-AW\geq0$ and $W$ satisfies the HJB equation \eqref{hjbinf0} with $\mathcal{T}=\mathcal{O}$. Also, $W$ satisfies the growth condition and has the same boundary conditions as $V_0$ by \eqref{optstop}. By Theorem \ref{uniq0}, we have that $W=V_0$. To prove the second equality we will do induction in the number of trades. Note that the function $\zeta\mapsto \zeta e^{-\lambda\zeta}$ in $[0,x]$ attains its maximum at $\hat{x}=\min\{x,\frac{1}{\lambda}\}$. Then,
$$\sup\limits_{\nu\in\Upsilon_1}\mathbb{E}[e^{-\beta \tau_1}\zeta_1P_{\tau_1-}e^{-\lambda\zeta_1}]\leq U(\hat{x},p)=\hat{x}p\leq W(x,p).$$
Now, let $\nu\in\Upsilon_n$. Hence,
$$\mathbb{E}[e^{-\beta\tau_1}\zeta_1P_{\tau_1}]=\mathbb{E}\left[e^{-\beta\tau_1}\mathbb{E}[\zeta_1P_{\tau_1-}e^{-\lambda\zeta_1}|\mathcal{F}_{\tau_1}]\right].$$
On the other hand, by induction hypothesis we have
\begin{align*}
\mathbb{E}\left[e^{-\beta\tau_1}\sum\limits_{i=2}^ne^{-\beta(\tau_i-\tau_1)}\zeta_iP_{\tau_i}\right]&
=\mathbb{E}\left[e^{-\beta\tau_1}\mathbb{E}\left[\left.\sum\limits_{i=2}^n e^{-\beta(\tau_i-\tau_1)}\zeta_iP_{\tau_i}\right|\mathcal{F}_{\tau_1}\right]\right]\\
&\leq\mathbb{E}\left[e^{-\beta\tau_1}\mathbb{E}[V(x-\zeta_1,e^{-\lambda\zeta_1}P_{\tau_1-})|\mathcal{F}_{\tau_1}]\right]\\
&\leq\mathbb{E}\left[e^{-\beta\tau_1}\mathbb{E}[W(x-\zeta_1,e^{-\lambda\zeta_1}P_{\tau_1-})|\mathcal{F}_{\tau_1}]\right]
\end{align*}
Combining both inequalities above we have
$$\mathbb{E}\left[\sum\limits_{i=1}^n e^{-\beta\tau_i}\zeta_iP^{\nu}_{\tau_i}\right]\leq\mathbb{E}[e^{-\beta\tau_1}W(x,P_{\tau_1-})]\leq W(x,p).$$
Again, by Lemma 7.1 in \citep{oeksendal05acjd}, the left hand side converges to $V$ as $n\rightarrow\infty$. Clearly the other inequality holds and the proof is complete.
\end{proof}

\paragraph{Example} Consider the case where the price process is a geometric Brownian motion. This is the only process that is considered in the papers \citep{ajay,He,Ly}. The unperturbed price process is
$$dP_t=\mu P_{t}dt+\sigma P_{t}dB_t,$$
with $\sigma>0$. It is easy to see that the value function $U$ is finite if and only if $\beta>\mu$. In this case the function $\psi$ takes the form
$$\psi(p)=p^{\nu},$$
where $\nu>1$, therefore condition \eqref{psi} holds. Now, the condition \eqref{optstop} reads
$$0\leq V(x,p)\leq U(x,p)=xp.$$
This implies that $V_0=V=W$. We can see that in this case the value function $W$ is not attainable with any impulse control, but we can approach it by trading smaller and smaller orders. We will show now how we can approach $W$ with singular (in fact regular) controls. Let $u>0$ and consider the strategy $d\xi_t=udt$, that is, selling shares at a constant speed $u$ until the investor executes the position. Then, 
$$P_t=p\exp\{(\mu-\lambda u-\frac{1}{2}\sigma^2)t+\sigma B_t\}$$
and 
\begin{align*}
\mathbb{E}\left[\int_0^{\infty}e^{-\beta t}P_td\xi_t\right]&=u\mathbb{E}\left[\int_0^{x/u}e^{-\beta t}P_tdt\right]\\
&=u\int_0^{x/u}e^{-\beta t}\mathbb{E}[P_t]dt\\
&=up\int_0^{x/u}e^{(\mu-\lambda u-\beta)t}dt\\
&=\frac{pu}{\mu-\lambda u-\beta}\left(e^{(\mu-\lambda u-\beta)x/u}-1\right)
\end{align*}
by using Fubini's theorem since the integrand is positive. Taking $u\rightarrow\infty$ this expression converges to $W$.

\section{Connection between both formulation}
\setcounter{equation}{0}

Theorem \ref{conecspecial} shows that $V=V_0$ for a special case, i.e., the value function of two different problems are the same. We are going to show that this is not a coincidence. Let us start with some notation: Given $k\geq0$ and $y=(x,p)\in\bar{\mathcal{O}}$ we denote:
 $$V^{(k)}(y)=\sup_{\nu}\mathbb{E}\left[\sum\limits_{n=1}^{M}e^{-\beta\tau_n}(\zeta_nP_{\tau_n}-k)\right]$$
and
$$\mathcal{M}^{(k)}\varphi(y)=\sup_{0\leq\zeta\leq x}\varphi(\Gamma(y,\zeta))+\zeta\alpha(\zeta,p)-k.$$
\begin{lemma}\label{limit}
For all $y\in \bar{\mathcal{O}}$ we have
$$\lim\limits_{k\rightarrow0}V^{(k)}(y)=V^{(0)}(y).$$
\end{lemma}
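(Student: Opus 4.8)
The plan is to prove the two inequalities
\[
\liminf_{k\to0}V^{(k)}(y)\geq V^{(0)}(y)
\qquad\text{and}\qquad
\limsup_{k\to0}V^{(k)}(y)\leq V^{(0)}(y),
\]
using the monotonicity of $V^{(k)}$ in $k$. Observe first that for a fixed admissible impulse strategy $\nu=(\tau_n,\zeta_n)_{1\leq n\leq M}$, the objective $\mathbb{E}\big[\sum_{n=1}^M e^{-\beta\tau_n}(\zeta_n P_{\tau_n}-k)\big]$ is non-increasing in $k$ (each trade is penalized by $k$), so $V^{(k)}$ is non-increasing in $k$ and $V^{(k)}\leq V^{(0)}$ for all $k\geq0$. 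Consequently the limit $\lim_{k\to0}V^{(k)}(y)$ exists (as a monotone limit), it is bounded above by $V^{(0)}(y)$, and the $\limsup$ inequality is immediate. The whole content of the lemma is therefore the reverse inequality $\lim_{k\to0}V^{(k)}(y)\geq V^{(0)}(y)$.

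\textbf{The lower bound via near-optimal strategies.}

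To obtain the reverse inequality I would fix $\varepsilon>0$ and choose an $\varepsilon$-optimal strategy $\nu_\varepsilon=(\tau_n,\zeta_n)_{1\leq n\leq M}$ for the \emph{cost-free} problem $V^{(0)}$, so that
\[
\mathbb{E}\left[\sum_{n=1}^{M}e^{-\beta\tau_n}\zeta_n P_{\tau_n}\right]\geq V^{(0)}(y)-\varepsilon.
\]
Evaluating the same strategy $\nu_\varepsilon$ in the problem with cost $k>0$ yields
\[
V^{(k)}(y)\geq \mathbb{E}\left[\sum_{n=1}^{M}e^{-\beta\tau_n}\zeta_n P_{\tau_n}\right]-k\,\mathbb{E}\left[\sum_{n=1}^{M}e^{-\beta\tau_n}\right]\geq V^{(0)}(y)-\varepsilon-k\,\mathbb{E}\left[\sum_{n=1}^{M}e^{-\beta\tau_n}\right].
\]
If the discounted number of trades $\mathbb{E}\big[\sum_{n=1}^{M}e^{-\beta\tau_n}\big]$ is finite for this strategy, then letting $k\to0$ and then $\varepsilon\to0$ gives the claim. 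The delicate point is precisely this finiteness: the $\varepsilon$-optimal strategy for $V^{(0)}$ might use many (even infinitely many) small trades, which is exactly the behaviour the cost-free problem encourages, so the penalty term need not be controllable for an arbitrary near-optimizer.

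\textbf{The main obstacle and how I would handle it.}

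The hard part will be choosing the near-optimal strategy so that its discounted trade count is finite, ideally uniformly enough that the penalty vanishes as $k\to0$. I would approach this by approximating from within the class of strategies with a \emph{bounded} number of trades: using the approximation result already invoked in the paper (Lemma 7.1 in \citep{oeksendal05acjd}, which says the values over $\Upsilon_n$ converge to the full value as $n\to\infty$), pick first an integer $N$ so that restricting to at most $N$ trades loses at most $\varepsilon/2$ of $V^{(0)}(y)$, and then pick an $\varepsilon/2$-optimal strategy $\nu_\varepsilon\in\Upsilon_N$ within that restricted class. For such a strategy $M\leq N$ is deterministic and bounded, hence $\mathbb{E}\big[\sum_{n=1}^{M}e^{-\beta\tau_n}\big]\leq N$, and the penalty term is at most $kN\to0$. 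Combining the two reductions gives
\[
V^{(k)}(y)\geq V^{(0)}(y)-\varepsilon-kN,
\]
and sending $k\to0$ then $\varepsilon\to0$ completes the proof. One should check that the approximation lemma applies equally to the cost-free formulation $V^{(0)}$ (it does, since $k=0$ is a valid parameter in the impulse model), and that the bound $U$ from the growth condition keeps all the expectations finite so the interchanges of limits are justified.
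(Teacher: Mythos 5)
Your proposal is correct and takes essentially the same route as the paper: the paper likewise fixes $\epsilon>0$, chooses an $\epsilon$-optimal strategy $\nu\in\Upsilon_m$ with a bounded number $m$ of trades, bounds the penalty term by $k\,\mathbb{E}\left[\sum_{i=1}^m e^{-\beta\tau_i}\right]\leq km\leq\epsilon$ for $k\leq\epsilon/m$, and concludes $V^{(0)}(y)\leq V^{(k)}(y)+2\epsilon$. The only difference is cosmetic: you invoke Lemma 7.1 of \citep{oeksendal05acjd} explicitly to justify restricting to finitely many trades (and separately record the easy monotonicity bound $V^{(k)}\leq V^{(0)}$), both of which the paper leaves implicit.
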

\begin{proof} It is clear that $V^{(0)}$ is an upper bound. Let $\epsilon>0$, then there is $m\geq0$ and $\nu\in\Upsilon_m$ such that
$$V^{(0)}(y)\leq\mathbb{E}\left[\sum\limits_{i=1}^me^{-\beta\tau_i}\zeta_i\alpha(\zeta_i,P_{\tau_i-})\right]+\epsilon.$$
For any $k\leq\frac{\epsilon}{m}$ we have that
\begin{align*}
V^{(0)}(y)&\leq\mathbb{E}\left[\sum\limits_{i=1}^me^{-\beta\tau_i}\left(\zeta_i\alpha(\zeta_i,P_{\tau_i-})-k\right)\right]+k\mathbb{E}\left[\sum\limits_{i=1}^me^{-\beta\tau_i}\right]+\epsilon\\
&\leq V^{(k)}(y)+2\epsilon.
\end{align*}
\end{proof}

\begin{theorem}
$V^{(0)}$ solves the Hamilton-Jacobi-Bellman
\begin{equation*}
\min\left\{\beta\varphi -A\varphi,\gamma(p)\frac{\partial\varphi}{\partial p}+\frac{\partial\varphi}{\partial x}-p\right\}=0\tag{\ref{hjbinf0}},
\end{equation*}
with $\gamma$ as in \eqref{gamma}. Therefore, by theorem \ref{uniq0}, $V^{(0)}=V_0$. 
\end{theorem}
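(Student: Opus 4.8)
The plan is to show that $V^{(0)}$ is a viscosity solution of \eqref{hjbinf0} and then invoke Theorem \ref{uniq0}. The hypotheses of that theorem are quick to verify: $0\le V^{(0)}\le U$ by \eqref{optstop} together with Lemma \ref{limit} (each $V^{(k)}\le U$, so the limit is as well), and, being squeezed between $0$ and $U$, the function $V^{(0)}$ inherits on $\partial\mathcal{O}$ exactly the boundary behavior of Section \ref{bousec}, which is the one required of $V_0$. Thus the whole content is the viscosity solution property, and its only delicate half is the subsolution property.

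For the supersolution property I would argue directly from the dynamic programming principle of the zero-cost impulse problem, reproducing the supersolution argument of Theorem \ref{viscosity0} with the continuous control used there replaced by an immediate impulse. At a minimizer $y_0$ of $(V^{(0)})_*-\varphi$ with $(V^{(0)})_*(y_0)=\varphi(y_0)$, the ``do nothing up to an exit time'' strategy together with Dynkin's formula gives $\beta\varphi(y_0)-A\varphi(y_0)\ge0$, while the ``trade $\zeta$ immediately'' strategy gives $V^{(0)}(y)\ge \zeta\alpha(\zeta,p)+V^{(0)}(\Gamma(y,\zeta))$; expanding this to first order in $\zeta$ and using $\partial_\zeta\alpha(0,p)=-\gamma(p)$ from \eqref{gamma} yields $\gamma(p_0)\varphi_p(y_0)+\varphi_x(y_0)-p_0\ge0$. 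Together the two inequalities are the supersolution inequality for \eqref{hjbinf0}.

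The subsolution property cannot be obtained by the same direct route: when $k=0$ one always has $\mathcal{M}^{(0)}V^{(0)}=V^{(0)}$ by \eqref{subalways}, so the nonlocal obstacle term carries no first-order information, and the needed gradient constraint must be recovered from the approximations $V^{(k)}$, $k>0$. This is precisely the role of Lemma \ref{limit}. Since $V^{(k)}\uparrow V^{(0)}$ as $k\downarrow0$, the upper relaxed limit of the family is $(V^{(0)})^*$, so for $\varphi\in C^2$ touching $(V^{(0)})^*$ from above at a strict maximizer $y_0$ there are $k_j\downarrow0$ and maximizers $y_j\to y_0$ of $(V^{(k_j)})^*-\varphi$ with $(V^{(k_j)})^*(y_j)\to(V^{(0)})^*(y_0)$. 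Applying the subsolution property of \eqref{hjbinf} at $y_j$: if $\beta\varphi-A\varphi\le0$ at $y_j$ for infinitely many $j$ we are done in the limit; otherwise the nonlocal term vanishes and there is a maximizing trade size $\zeta_j\in(0,x_j]$, positive because $k_j>0$, with $(V^{(k_j)})^*(y_j)\le (V^{(k_j)})^*(\Gamma(y_j,\zeta_j))+\zeta_j\alpha(\zeta_j,p_j)-k_j$.

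The main obstacle is the size of $\zeta_j$ in the limit. If $\zeta_j\to0$ the conclusion is routine: dividing the last inequality by $\zeta_j$, Taylor expanding, and discarding the favorable term $k_j/\zeta_j>0$ gives $\gamma(p_0)\varphi_p(y_0)+\varphi_x(y_0)-p_0\le0$ in the limit, i.e. the subsolution inequality. The genuinely hard case is a macroscopic limiting trade $\zeta_j\to\zeta_\infty>0$, and ruling it out is exactly where assumption \eqref{suma} is indispensable. One compares two strategies out of $y_j$ that end in the same post-trade state $\Gamma(y_j,\zeta_j)$ and hence share the same continuation value: the single optimal trade, for which the inequality above holds, and the order split into $N_j$ immediate equal pieces, for which \eqref{suma} forces the intermediate prices to be $\alpha(i\zeta_j/N_j,p_j)$ and the terminal state to remain $\Gamma(y_j,\zeta_j)$, so that the dynamic programming principle gives $V^{(k_j)}(y_j)\ge \sum_{i=1}^{N_j}\frac{\zeta_j}{N_j}\alpha(\frac{i\zeta_j}{N_j},p_j)-N_jk_j+V^{(k_j)}(\Gamma(y_j,\zeta_j))$. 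Choosing $N_j\to\infty$ with $N_jk_j\to0$ and subtracting, the continuation values cancel while the Riemann sum converges, leaving $\zeta_\infty\alpha(\zeta_\infty,p_0)\ge\int_0^{\zeta_\infty}\alpha(s,p_0)\,ds$, in direct contradiction with the strict form of \eqref{inqlema1} for a strictly decreasing price impact. Hence $\zeta_\infty=0$, the subsolution inequality holds, and $V^{(0)}$ is a viscosity solution of \eqref{hjbinf0}; Theorem \ref{uniq0} then yields $V^{(0)}=V_0$. The only point requiring care is the standard relaxed-limit bookkeeping relating $V^{(k_j)}$ to its semicontinuous envelopes at the shifted points $\Gamma(y_j,\zeta_j)$, which does not affect the structure of the argument.
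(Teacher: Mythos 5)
Your proposal is correct and follows essentially the same route as the paper: supersolution directly from the zero-cost impulse HJB/DPP with the first-order condition at $\zeta=0$ giving $\gamma(p_0)\varphi_p+\varphi_x-p_0\ge0$, and subsolution via the positive-cost approximations $V^{(k)}$, Lemma \ref{limit}, convergence of the approximate maximizers $y_k\to y_0$ at a strict maximum, and the dichotomy between the elliptic branch and a binding obstacle with trade sizes $\zeta_k>0$ whose limit must be shown to vanish. The one step where you diverge is the exclusion of a macroscopic limiting trade $\zeta_\infty>0$: the paper splits the optimal order into just \emph{two} pieces ($\zeta'/2$ and $\zeta_k-\zeta'/2$), using \eqref{suma} to keep the terminal state fixed, and gets the contradiction $\mathcal{M}V^{(k)}(y_k)\ge\mathcal{M}V^{(k)}(y_k)+\tfrac{\zeta'}{2}\bigl[\alpha(\zeta'/2,p_k)-\alpha(\zeta_k,p_k)\bigr]-k$ for $k$ small, while you split into $N_j\to\infty$ pieces with $N_jk_j\to0$ and pass to the Riemann-sum limit $\zeta_\infty\alpha(\zeta_\infty,p_0)\ge\int_0^{\zeta_\infty}\alpha(s,p_0)\,ds$, contradicting the strict form of \eqref{inqlema1}. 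Both exploit the same mechanism (order splitting is strictly profitable under \eqref{suma} once the fixed cost is small), and both require $\alpha$ strictly decreasing in $\zeta$; the paper's two-piece version is slightly more economical since it needs only one extra transaction cost rather than the choice of a sequence $N_j$ with $N_jk_j\to0$, but your version is equally valid.

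One small omission relative to the paper: your contradiction in the macroscopic case uses strict monotonicity of $\alpha$ in $\zeta$, which fails exactly in the no-impact case $\alpha(\zeta,p)=p$ (which does satisfy \eqref{suma}), i.e.\ $\gamma\equiv0$. The paper disposes of this degenerate case separately at the outset: by Proposition \ref{growprop}, $V^{(0)}(x,p)=x\,U^{(0)}(p)$ with $U^{(0)}$ the value of the optimal stopping problem, and the stopping HJB for $U^{(0)}$ translates directly into \eqref{hjbinf0} with $\gamma\equiv0$. You should add this case (or state explicitly that you assume genuine impact, noting as the paper does that \eqref{suma} then forces strict decrease); with that addendum, the argument is complete.
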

\begin{proof}
First, consider the case when there is no impact in the price. Then $\gamma\equiv0$ and by proposition \ref{growprop} $V^{(0)}(y)=xU^{(0)}(p)$, where $U^{(0)}(p)=\sup_{\tau}\mathbb{E}[e^{-\beta\tau}P_{\tau}]$. Since $U$ is viscosity solution of
$$\min\left\{\beta\varphi -A\varphi,\varphi-p\right\}=0$$
Then $V^{(0)}$ is solution of \eqref{hjbinf0}.

Assume now that there is price impact. We know that $V^{(0)}$ satisfies the equation
 $$\min\left\{\beta\varphi-A\varphi,\varphi-\mathcal{M}^{(0)}\varphi\right\}=0.$$
Supersolution property: Let $y_0\in\mathcal{O}$ and $\varphi\in C^{2}(\mathcal{O})$ such that $y_0$ is a minimizer of $V^{(0)}_*-\varphi$ on $\mathcal{O}$ with $V_*^{0}(y_0)=\varphi(y_0)$. Hence, $\beta V_*^{(0)}(y_0)-A\varphi(y_0)\geq0$. Now, let $0\leq\zeta^*\leq x_0$ such that $\mathcal{M}^{(0)}\varphi(y_0)=\varphi(x_0-\zeta^*,\alpha(\zeta^*,p_0))+\zeta^*\alpha(\zeta^*,p_0)$. Thus,
\begin{align*}
0&\leq V^{(0)}_*(y_0)-\mathcal{M}^{(0)}V^{(0)}_*(y_0)\\
&\leq V^{(0)}_*(y_0)- V^{(0)}_*(x_0-\zeta^*,\alpha(\zeta^*,p_0))-\zeta^*\alpha(\zeta^*,p_0)\\
& \leq \varphi(y_0)-\varphi(x_0-\zeta^*,\alpha(\zeta^*,p_0))-\zeta^*\alpha(\zeta^*,p_0)\\
&=\varphi(y_0)-\mathcal{M}^{(0)}\varphi(y_0).
\end{align*}
Since $\varphi\leq\mathcal{M}^{(0)}\varphi$, then $\varphi(y_0)=\mathcal{M}^{(0)}\varphi(y_0)$. This implies that $\zeta=0$ is a maximum for $\zeta\mapsto \varphi(x_0-\zeta,\alpha(\zeta,p_0))+\zeta\alpha(\zeta,p_0)$, therefore
\begin{align*}
0&\geq\left.\frac{\partial\alpha}{\partial\zeta}(\zeta,p_0)\frac{\partial \varphi}{\partial p}(y_0)-\frac{\partial \varphi}{\partial x}(y_0)+\alpha(\zeta,p_0)+\zeta\frac{\partial\alpha}{\partial\zeta}(\zeta,p_0)\right|_{\zeta=0}\\
&=-\gamma(p_0)\frac{\partial \varphi}{\partial p}(y_0)-\frac{\partial \varphi}{\partial x}(y_0)+p_0.
\end{align*}
Subsolution property: Let $y_0\in\mathcal{O}$ and $\varphi\in C^{2}(\mathcal{O})$ such that $y_0$ is a maximizer of $V^{(0)*}-\varphi$ on $\mathcal{O}$ with $V^{(0)*}(y_0)=\varphi(y_0)$. Without loss of generality we can assume that $y_0$ is a strict local maximum, that is, there exists $\delta>0$ such that $y_0$ is maximum of $V^{(0)*}-\varphi$ over $B_{\delta}(y_0)\subset\mathcal{O}$. Let $(y_n)$ be a sequence in $B_{\delta}(y_0)$ such that $y_n\rightarrow y_0$ and 
$$\lim_{n\rightarrow\infty}V^{(0)}(y_n)=V^{(0)*}(y_0).$$
Recall that $V^{(k)}$ is continuous and is the unique viscosity solution of
\begin{equation}\label{hjbinfk}
\min\left\{\beta\varphi-A\varphi,\varphi-\mathcal{M}^{(k)}\varphi\right\}=0,
\end{equation}
for all $k>0$. Let $y_k$ be a maximum of $V^{(k)}-\varphi$ over  $B_{\delta}(y_0)$ and let $y'$ be a limit point of $(y_k)$ as $k\rightarrow0$. For all $k$ and all $n$ we have that
$$V^{(k)}(y_n)-\varphi(y_n)\leq V^{(k)}(y_k)-\varphi(y_k)\leq V^{(0)}(y_k)-\varphi(y_k).$$
By lemma \ref{limit}, taking $k\rightarrow0$ along the sequence such that $y_k\rightarrow y'$, we have that for all $n$
$$V^{(0)}(y_n)-\varphi(y_n)\leq V^{(0)*}(y')-\varphi(y').$$
Taking $n\rightarrow\infty$ we obtain that
$$V^{(0)*}(y_0)-\varphi(y_0)\leq V^{(0)*}(y')-\varphi(y')$$
and therefore $y'=y_0$ since $y_0$ is a strict local maximum. Thus, $y_k\rightarrow y_0$ as $k\rightarrow0$. Let $\epsilon>0$ , since $V^{(0)}(y_n),\varphi(y_n)\rightarrow\varphi(y_0)$, then for a fix $n$ large enough $\varphi(y_n)-V^{(0)}(y_n)\leq\epsilon$ and for $k$ small enough $V^{(0)}(y_n)-V^{(k)}(y_n)\leq\epsilon$. Hence
$$\varphi(y_k)-V^{(k)}(y_k)\leq\varphi(y_n)-V^{(k)}(y_n)\leq2\epsilon.$$
The above shows that $y_k$ is a local maximum of $V^{(k)}-\varphi^{(k)}$ over $B_{\delta}(y_0)$ where $V^{(k)}(y_k)=\varphi^{(k)}(y_k)$, $\varphi^{(k)}=\varphi-\epsilon_k$ and $0<\epsilon_k\rightarrow0$ as $k\rightarrow0$.
Since $V^{(k)}$ is subsolution of \eqref{hjbinfk}, we can consider two cases:
\begin{itemize}
\item There exists a sequence such that $\beta \varphi^{(k)}(y_k)-A\varphi^{(k)}(y_k)=\beta V^{(k)}(y_k)-A\varphi^{(k)}(y_k)\leq0$. Taking $k\rightarrow0$ along the sequence we have that
$$\beta V^{(0)*}(y_0)-A\varphi(y_0)=\beta\varphi(y_0)-A\varphi(y_0)\leq0$$
and $V^{(0)}$ is subsolution.
\item For all $k$ small enough $\beta \varphi^{(k)}(y_k)-A\varphi^{(k)}(y_k)>0$. This implies that there exists $0\leq\zeta_k\leq x_k$ such that
\begin{equation}\label{zetak}
\mathcal{M}V^{(k)}(y_k)=V^{(k)}(x_k-\zeta_k,\alpha(\zeta_k,p_k))+\zeta_k\alpha(\zeta_k,p_k)-k\geq\varphi^{(k)}(y_k).
\end{equation}
Let $\zeta'$ be a limit point of $(\zeta_k)$ as $k\rightarrow0$. We claim that $\zeta'=0$: Suppose $\zeta'>0$, then for $k$ small enough such that $0<\zeta'/2<\zeta_k$
\begin{align*}
\mathcal{M}V^{(k)}(y_k)&\geq V^{(k)}(x_k-\zeta'/2,\alpha(\zeta'/2,p_k))+\alpha(\zeta'/2,p_k)\zeta'/2-k\\
&\geq V^{(k)}(x_k-\zeta_k,\alpha(\zeta_k,p_k))+\alpha(\zeta'/2,p_k)\zeta'/2+\alpha(\zeta_k,p_k)(\zeta_k-\zeta'/2)-2k\\
&=\mathcal{M}V^{(k)}(y_k)+\zeta'/2[\alpha(\zeta'/2,p_k)-\alpha(\zeta_k,p_k)]-k.
\end{align*}
where the second inequality follows from \eqref{suma} and the definition of $V^{(k)}$. Now, since $\alpha$ is strictly decreasing in $\zeta$ (otherwise condition \eqref{suma} cannot hold) we can choose $k$ small such that $\zeta'/2[\alpha(\zeta'/2,p_k)-\alpha(\zeta_k,p_k)]>k$ and we get a contradiction.

Since $\zeta_k\rightarrow0$, for $k$ small we have that $(x_k-\zeta_k,\alpha(\zeta_k,p_k))\in B_{\delta}(y_0)$. Therefore, from \eqref{zetak} we have that $\varphi^{(k)}(x_k-\zeta_k,\alpha(\zeta_k,p_k))+\zeta_k\alpha(\zeta_k,p_k)-k\geq\varphi^{(k)}(y_k)$, that is the same as
\begin{equation}\label{key}
\varphi(x_k-\zeta_k,\alpha(\zeta_k,p_k))+\zeta_k\alpha(\zeta_k,p_k)>\varphi(x_k-\zeta_k,\alpha(\zeta_k,p_k))+\zeta_k\alpha(\zeta_k,p_k)-k\geq\varphi(y_k).
\end{equation}
If we consider, for each $k$, the map $\zeta\mapsto\varphi(x_k-\zeta,\alpha(\zeta,p_k))+\zeta\alpha(\zeta,p_k)$, \eqref{key} implies that there exists $\hat{\zeta}_k\in(0,\zeta_k)$ such that
$$0\leq\frac{\partial\alpha}{\partial\zeta}(\hat{\zeta}_k,p_k)\frac{\partial \varphi}{\partial p}(x_k-\hat{\zeta}_k,\alpha(\hat{\zeta}_k,p_k))-\frac{\partial \varphi}{\partial x}(x_k-\hat{\zeta}_k,\alpha(\hat{\zeta}_k,p_k))+\alpha(\hat{\zeta}_k,p_k)+\hat{\zeta}_k\frac{\partial\alpha}{\partial\zeta}(\hat{\zeta}_k,p_k).$$
Taking $k\rightarrow0$
$$0\leq-\gamma(p_0)\frac{\partial \varphi}{\partial p}(y_0)-\frac{\partial \varphi}{\partial x}(y_0)+p_0.$$
\end{itemize}
\end{proof}

\section{Computational Examples}

We are going to present different choices of price processes.  Throughout this section we will consider the price impact function:
\begin{align*}
\alpha(x,p)&=pe^{-\lambda x}\\
\gamma(p)&=\lambda p
\end{align*}
In the following examples, analytical solutions for $V$ do not seem easy to find,  so we used an implicit numerical scheme following chapter 6 in \citep{Kushner}. In particular, we used the Gauss-Seidel iteration method for approximation in the value space. Additionally, for the impulse control case we followed the iterative procedure described in \citep{oeksendal05acjd}.

\subsection{Impulse control with positive fixed transaction cost}

    \begin{figure}[h]
        \centering
        \includegraphics[scale=0.4]{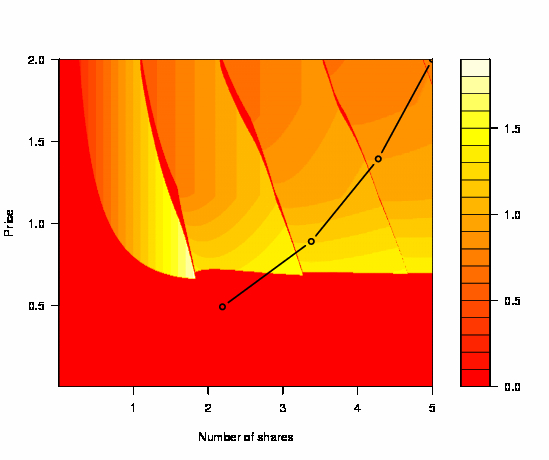}
      \label{traGBM}
      \caption{Optimal number of shares with parameters $\lambda=0.5$, $\mu=2$, $\sigma=1$, $\beta=4$ and $k=0.2$.}
     \end{figure}

Consider the price process following a geometric Brownian motion with $\mu<\beta$ so that the value function is finite. Figure 1 shows the contour plot of the optimal number of shares the investor need to trade. The figure also shows the optimal strategy when the investor starts with 5 shares at a price of 2. At time 0, the investor needs to trade three times until the state variable enters the continuation region $\mathcal{C}$ (i.e. when the optimal number of shares is 0).When $k$ is smaller, the number of trades at time 0 increases and the continuation region shrinks. When $k=0$ we obtain the situation described in theorem \ref{conecspecial}.

\subsection{Singular control}

Consider the case when the price process follows an arithmetic Brownian motion. Then the price dynamics are
$$dP_t=\mu dt+\sigma dB_t-\lambda P_{t}d\xi_t,$$
with $\sigma>0$. In this case the value function is always finite, regardless of $\mu$, due to the exponential decay of the discount factor. Since 0 is an absorbing boundary for this process the boundary conditions are given by \eqref{boundaryinfab}. Figure \ref{valueBM} shows the value function obtained by the scheme.

\begin{figure}
  \centering
  \subfigure[Value function in the BM case with parameters $\lambda=0.5$, $\mu=4$, $\sigma=0.5$ and $\beta=1$.]   
  {
	\includegraphics[scale=0.4]{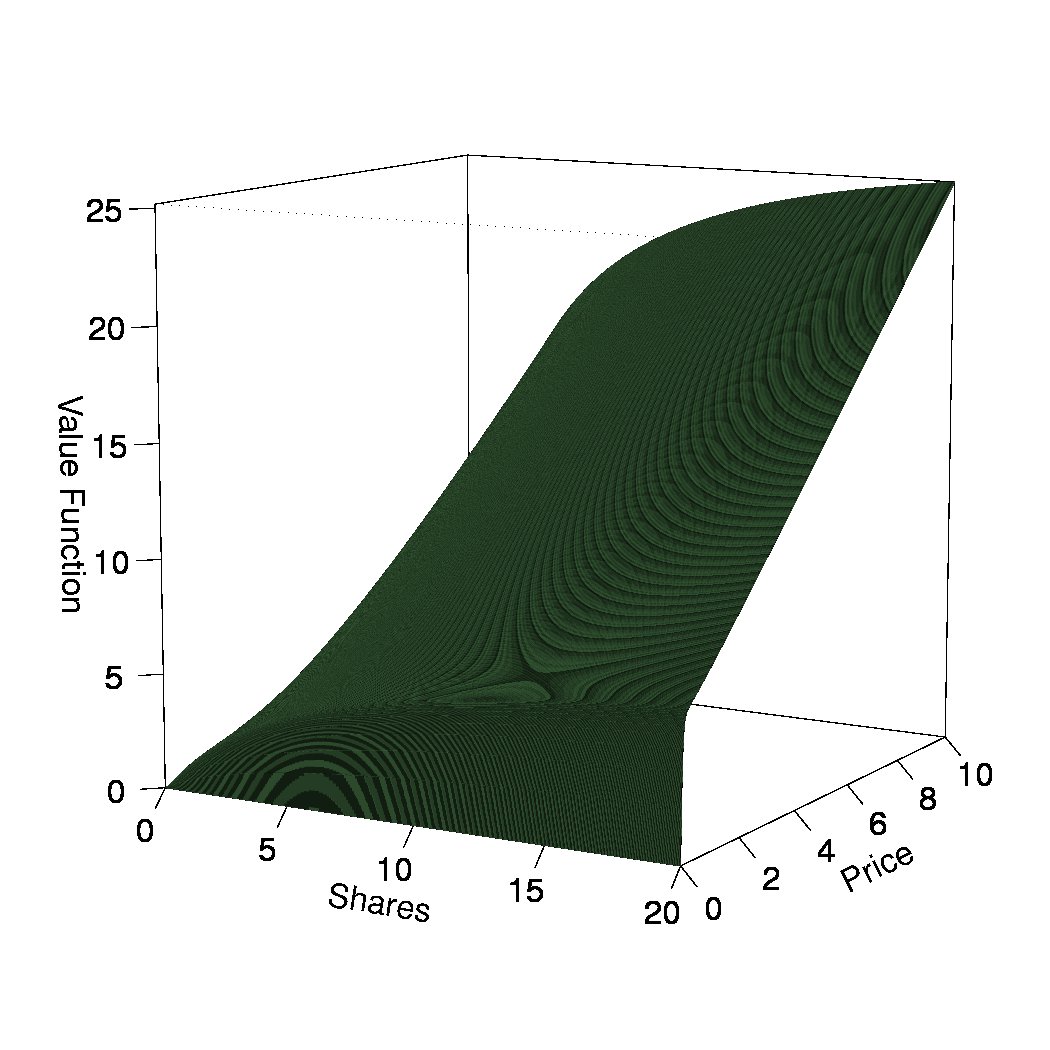}
        \label{valueBM}
  }
  \subfigure[Continuation-trade region in the BM case. The solid line shows the contour with parameters $\lambda=0.5$, $\mu=4$, $\sigma=0.5$ and $\beta=1$. In the other lines only the indicated parameter has been changed.]
  {
  	\includegraphics[scale=0.4]{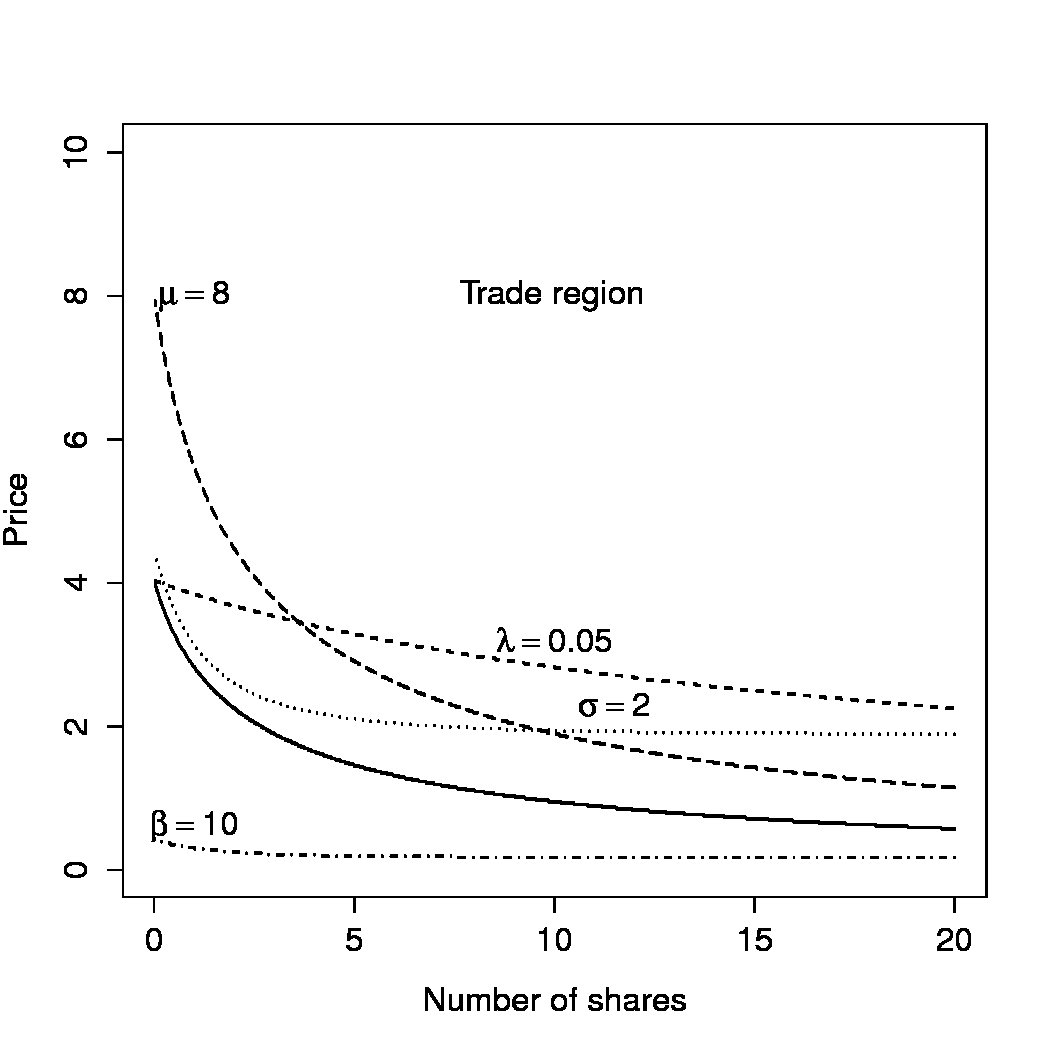}
	\label{tradeBM}	
  }
  \caption{Value function and continuation-trade region in the BM case.}
\end{figure}

First note that the conditions of Theorem \ref{conecspecial} are not satisfied, that is $U(x,p)\neq xp$, and therefore $\mathcal{T}\neq\mathcal{O}$, as shown in figure \ref{tradeBM}. Thus, in this case the optimal strategy would be to trade very fast in the trading region until the state variable hit the free boundary. The figure also shows how the different parameters affect the continuation/trade regions. Now, let's see how the change in the parameters of the model affect the value function $V$. Figure \ref{lambdaBM} shows that the value function is very sensitive to changes in the parameter $\lambda$ for small values but not so much for large values. This behavior is common to both processes GBM (described by theorem \ref{conecspecial}) and BM. This means that the bigger the investor (i.e. the larger the price impact) the less sensitive to small changes in the value of $\lambda$. Clearly the value function decreases as the impact increases.

If $\beta=0$, the value function would not be finite for any $\mu>0$, so small values of $\beta$ yield a very large value of $V$. As $\beta$ increases the effect in $V$ is diminishing. Also, the investor has to act greedily and therefore the trade region approaches to $\mathcal{O}$ and $V$ approaches to $W$.

\begin{figure}
    \centering
    \subfigure[Change in $V(5,2)$ as $\lambda$ varies and $\mu=4$, $\sigma=0.5$ and $\beta=1$.]
    {
        \includegraphics[scale=0.3]{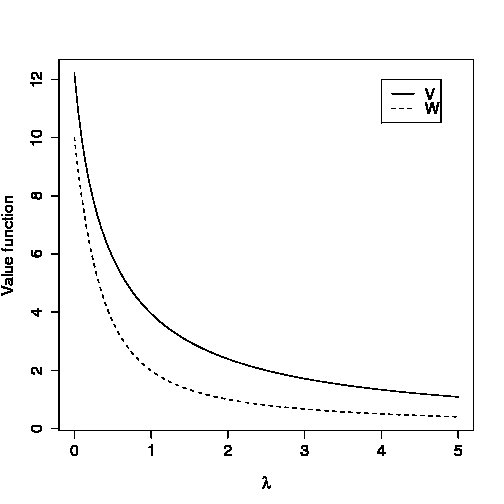}
        \label{lambdaBM}
    }
    \subfigure[Change in $V(5,2)$ as $\beta$ varies and $\mu=4$, $\sigma=0.5$ and $\lambda=0.5$.]
    {
        \includegraphics[scale=0.3]{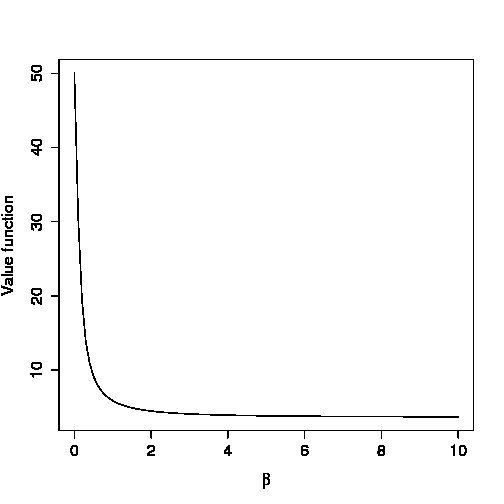}
        \label{betaBM}
    }
    \\
    \subfigure[Change in $V(5,2)$ as $\mu$ varies and $\lambda=0.5$, $\sigma=0.5$ and $\beta=1$.]
    {
        \includegraphics[scale=0.3]{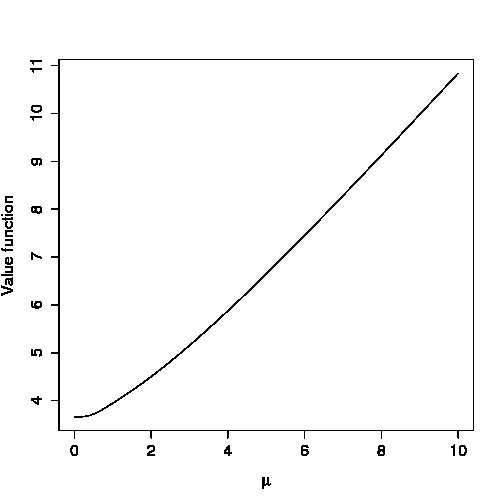}
        \label{muBM}
    }
    \subfigure[Change in $V(5,2)$ as $\sigma$ varies and $\mu=4$, $\lambda=0.5$ and $\beta=1$.]
    {
        \includegraphics[scale=0.3]{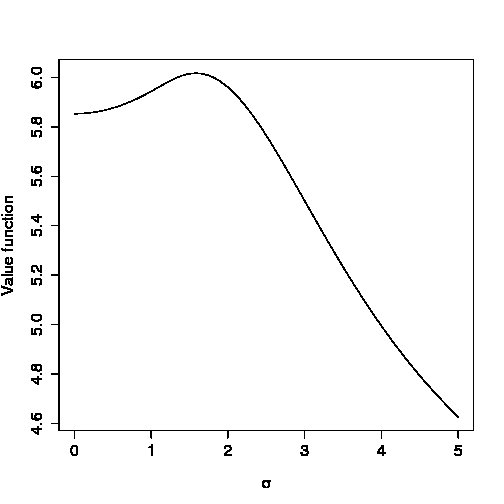}
        \label{sigmaBM}
    }
    \caption{Change in the parameters of the model BM.}
    \label{Sensi}
\end{figure}

For $\mu\leq0$ it is not optimal to wait at all, so $V=W$, but as $\mu$ increases clearly the value function increases in an almost linear fashion.

The effect of $\sigma$ in the value function is probably the most interesting one. In figure \ref{sigmaBM} we see that it is beneficial for the investor to have some variance in the asset but not too much. An explanation for this is that when the variance increases it is more likely for the price process to enter the trading region. On the other hand, if the variance is too big, the process can hit 0 too fast. Clearly the variance of the revenue increases with $\sigma$, thus as part of future research it would be interesting to consider the risk aversion of the investor.

\begin{figure}
  \centering
  \subfigure[Value function in the OU case with parameters $\lambda=0.5$, $\alpha=4$, $\sigma=0.5$, $m=5$ and $\beta=1$.]   
  {
	\includegraphics[scale=0.4]{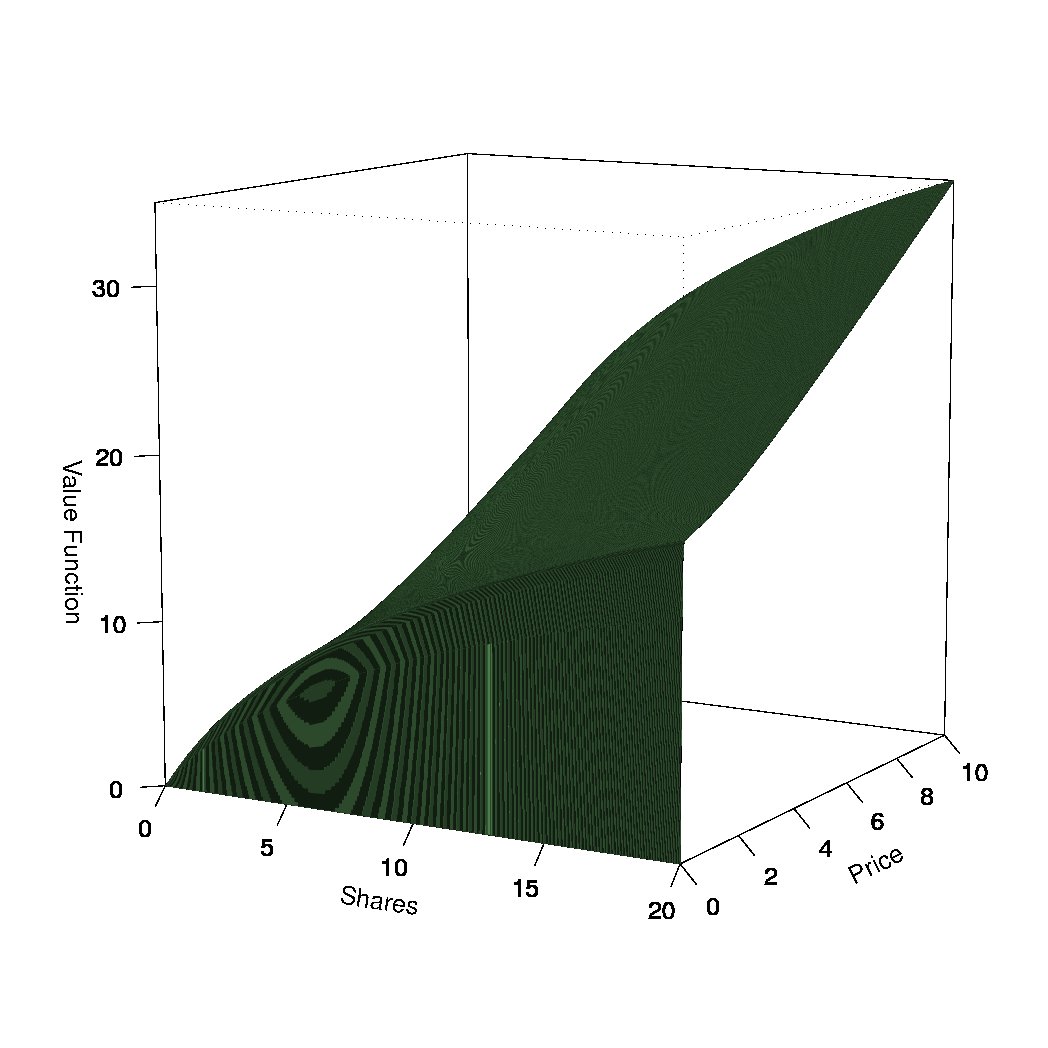}
        \label{valueOU}
  }
  \subfigure[Continuation-trade region in the OU case. The solid line shows the contour with parameters $\lambda=0.5$, $\alpha=4$, $\sigma=0.5$, $m=5$ and $\beta=1$. In the other lines only the indicated parameter has been changed.]
  {
  	\includegraphics[scale=0.4]{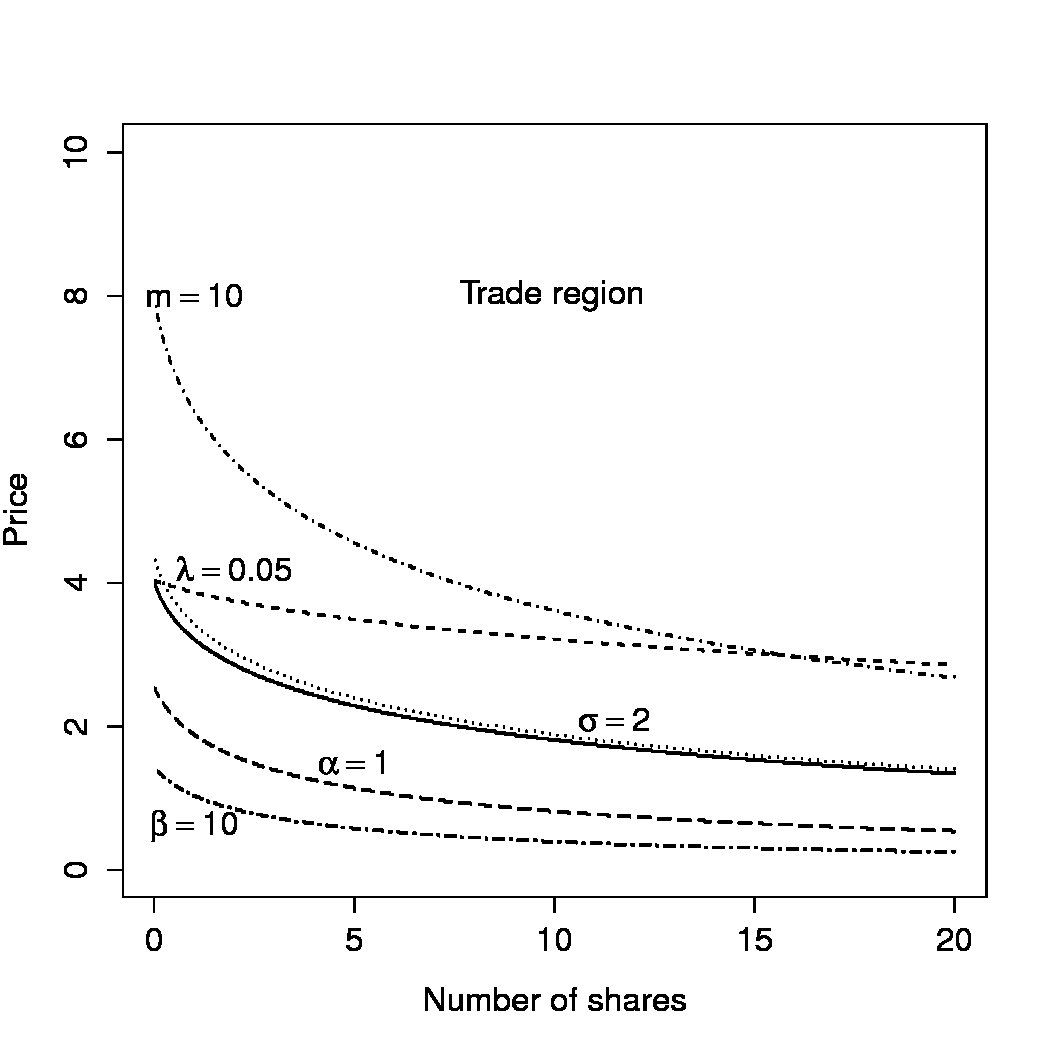}
	\label{tradeOU}	
  }
  \caption{Value function and continuation-trade region in the mean-reverting case.}
  \label{OU}
\end{figure}

The second example is when the price follows the Ornstein–Uhlenbeck process. Then the price process becomes
$$dP_t=\alpha(m-P_{t})dt+\sigma dB_t-\lambda P_{t}d\xi_t,$$
with $\sigma,\alpha>0$. As in the case of arithmetic Brownian motion, the boundary conditions are given by \eqref{boundaryinfab}, since 0 is an absorbing boundary for this process. Figure \ref{OU} shows the value function and the continuation-trade region. Again, this case does not fit within Theorem \ref{conecspecial}, so the strategy is similar to the BM case. Also, the sensitivity of the function to the parameters is similar to the previous case. The only parameter that is exclusive to the mean-reverting process is the resilience factor $\alpha$. As we increase $\alpha$ the value function increases (Figure \ref{alpha}) and the continuation region grows (Figure \ref{tradeOU}).

\begin{figure}
    \centering
    \subfigure[Change in $V(5,2)$ as $\lambda$ varies and $m=5$, $\sigma=0.5$, $\alpha=4$ and $\beta=1$.]
    {
        \includegraphics[scale=0.3]{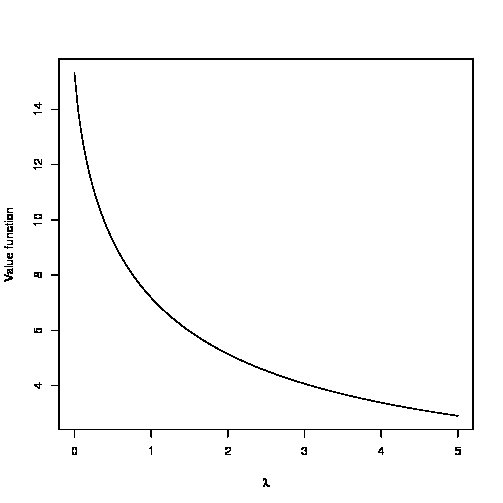}
        \label{lambdaOU}
    }
    \\
    \subfigure[Change in $V(5,2)$ as $\beta$ varies and $m=5$, $\sigma=0.5$, $\alpha=4$ and $\lambda=0.5$.]
    {
        \includegraphics[scale=0.3]{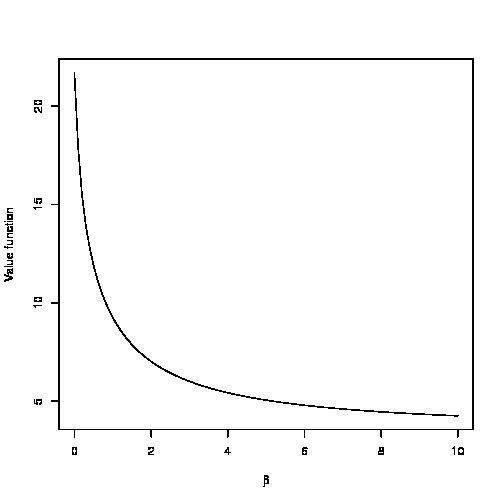}
        \label{betaOU}
    }
    \subfigure[Change in $V(5,2)$ as $m$ varies and $\alpha=4$, $\sigma=0.5$, $\beta=1$ and $\lambda=0.5$.]
    {
        \includegraphics[scale=0.3]{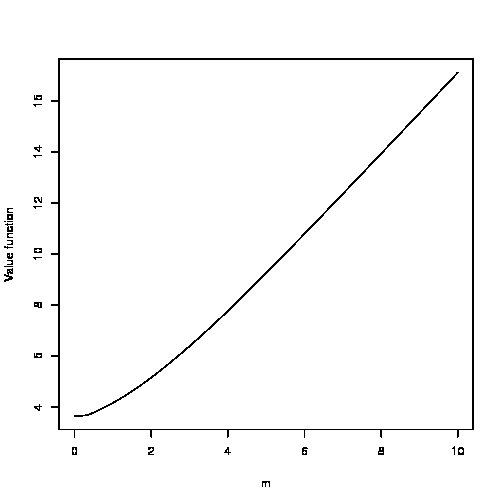}
        \label{mean}
    }
    \\
    \subfigure[Change in $V(5,2)$ as $\alpha$ varies and $\lambda=0.5$, $\sigma=0.5$, $m=5$ and $\beta=1$.]
    {
        \includegraphics[scale=0.3]{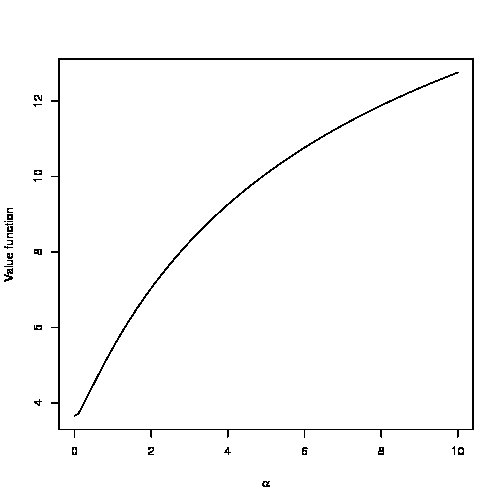}
        \label{alpha}
    }
    \subfigure[Change in $V(5,2)$ as $\sigma$ varies and $\alpha=4$, $\lambda=0.5$, $m=5$ and $\beta=1$.]
    {
        \includegraphics[scale=0.3]{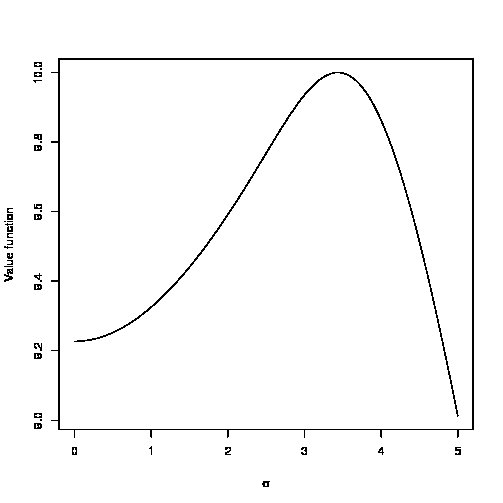}
        \label{sigmaOU}
    }
    \caption{Change in the parameters of the model OU.}
    \label{SensiOU}
\end{figure}

\section{Conclusions}

The main goal of this work was to characterize the value function of the optimal execution strategy in the presence of price impact and fixed transaction cost over an infinite horizon. We formulated the problem using two different stochastic control settings. In the impulse control formulation we showed that the value function is the unique continuous viscosity solution of the Hamilton-Jacobi-Bellman equation associated to the problem whenever the transaction cost is strictly positive. The second formulation ruled out any transaction cost and admitted continuous singular controls only. In this case we also proved continuity and uniqueness of the value function under the viscosity framework. The next step, part of future research, would be to find the regularity of the value function. Numerical results provided in this paper, at least for the second formulation, suggest that the function is more than just continuous and that its regularity is related with the regularity of the function $U$ defined in Section \ref{model}. Although any impulse control is a singular control, in general the expected revenue obtained when applying the same impulse control in both formulation is different. However, the value function may be the same. In fact, we were able to show that this is the case for a special type of price impact and provide the explicit solution. The question if this is true in general is still unanswered. This is particularly challenging since the subsolution property for the HJB equation \eqref{hjbinf}, when there is no transaction cost, has no information at all. Try to find answers is part of future work. Another important conclusion is that the HJB equation for the regular control formulation, \eqref{hjbreg} has not enough information to characterize the solution.  From an economic viewpoint, it would be important to study the effect of the price impact in hedging strategies and how they are different to the strategies obtained in classical models, e.g. Delta-hedging in Black-Scholes setting. Include utility functions to account for risk aversion is another important extension of this work. Finally, the finite time horizon natural extension is currently in preparation.
\appendix
\section{Proof of $(\mathcal{M}V)^*\leq\mathcal{M}V^*$}\label{lemita}

Let $\varphi$ be a locally bounded function on $\bar{\mathcal{O}}$. Let $(y_n)$ be a sequence in $\mathcal{O}$ such that $(y_n)\rightarrow y_0$ and
$$\lim_{n\rightarrow\infty}\mathcal{M}\varphi(y_n)=(\mathcal{M}\varphi)^*(y_0).$$
Since $\varphi^*$ is usc and $\Gamma$ is continuous, for each $n\geq1$ there exists $0\leq\zeta_n\leq x_n$ such that
$$\mathcal{M}\varphi^*(y_n)=\varphi^*(\Gamma(y_n,\zeta_n))+\zeta_n\alpha(\zeta_n,p_n)-k.$$
The sequence $(\zeta_n)$ is bounded (since $x_n\rightarrow x_0$) and therefore converges along a subsequence to $\zeta\in[0,x_0]$. Hence
\begin{align*}
(\mathcal{M}\varphi)^*(y_0)&=\lim_{n\rightarrow\infty}\mathcal{M}\varphi(y_n)\\
&\leq\limsup_{n\rightarrow\infty}\mathcal{M}\varphi^*(y_n)\\
&=\limsup_{n\rightarrow\infty}\varphi^*(\Gamma(y_n,\zeta_n))+\zeta_n\alpha(\zeta_n,p_n)-k\\
&\leq\varphi^*(\Gamma(y_0,\zeta))+\zeta\alpha(\zeta,p_0)-k\\
&\leq\mathcal{M}\varphi^*(y_0).
\end{align*}

\bibliography{mybib}
\bibliographystyle{plainnat}

\end{document}